\documentclass[11pt]{amsart}

\usepackage{amsbsy,url}
\usepackage{amssymb,latexsym,amsmath,amsthm,amsfonts}
\usepackage{color} 
\usepackage{float}
\usepackage{placeins}
\usepackage{graphicx}
\usepackage{subcaption}
\usepackage{tikz}
\usetikzlibrary{arrows.meta,positioning}
\theoremstyle{plain}
\newcounter{Theorem}
\newtheorem{theorem}[Theorem]{Theorem}
\newtheorem{lemma}[Theorem]{Lemma} \newtheorem{proposition}[Theorem]{Proposition}
\newtheorem{corollary}[Theorem]{Corollary} \theoremstyle{remark} \newcounter{Definition}
\newcounter{Remark}
\newtheorem{definition}[Definition]{Definition}
\newtheorem{remark}[Remark]{Remark}

\theoremstyle{plain}
\newtheorem{theo}{\bf Theorem}[section]

\theoremstyle{definition}

\newtheorem{example}[theo]{\bf Example}
\theoremstyle{remark}

\numberwithin{equation}{section}
\makeatletter
\@namedef{subjclassname@2020}{%
  \textup{2020} Mathematics Subject Classification}
\makeatother

\begin{document}
\title{Enhancing Phase Retrievability of Quantum Channels via Interferometric Coupling}

\author{Kai Liu}
\address{Department of Mathematics, University of Central Florida, Orlando, FL 32816, USA}
\email{kai.liu@ucf.edu}
\thanks{\textsuperscript{*}Corresponding author: \texttt{kai.liu@ucf.edu}.}

\author{Deguang Han}
\address{Department of Mathematics, University of Central Florida, Orlando, FL 32816, USA}
\email{deguang.han@ucf.edu}

\author{Omar Nour}
\address{Department of Mathematics, University of Central Florida, Orlando, FL 32816, USA}
\email{omar.nour@ucf.edu}

\keywords{Phase retrieval, operator-valued frames, quantum channels, frame coupling, quantum interference}


\thispagestyle{plain}
\pagestyle{plain}





\begin{abstract}
Phase retrievability of a quantum channel asks whether pure states can be reconstructed from suitable measurements. In this paper, we study this problem from three complementary viewpoints: quantum information theory, operator-valued frames, and the physical realization through quantum interferometry. We first show that a quantum channel is phase retrievable if and only if its complementary channel is pure-state informationally complete. This structural characterization leads to several consequences for phase retrievability, including criteria involving the dimension of the complementary operator system, Choi-rank type bounds, and specific results for entanglement breaking channels and twirling channels.\\
\\ 
We then introduce an interferometric coupling in which two arm channels are coherently recombined through port operators \(M_i(\theta)=A_i+e^{i\theta}B_i\). Unlike classical mixing, this construction produces interference cross terms that can enlarge the complementary operator system and thereby enhance phase retrievability. From the frame theory viewpoint, the interferometer realizes a coherent coupling of operator-valued frames. To quantify this effect, we introduce injectivity indices for completely positive maps. The examples in Section~5 show that coherent interference can significantly improve phase retrieval behavior even when the arm channels are individually not phase retrievable.
\end{abstract}

\maketitle

\section{Introduction}

\noindent Phase retrieval is a classical inverse problem that asks whether a vector can be reconstructed, up to a unimodular scalar, from phaseless measurement data\cite{BCE06}. In quantum physics, closely related questions arise in pure-state tomography, where one asks whether a measurement scheme determines an unknown pure state uniquely and how many observables are required for such a reconstruction. This viewpoint appears naturally in tomography under prior information \cite{HMW13}, in measurement schemes using few orthonormal bases \cite{ZambranoPereiraDelgado2024}, and more recently in shadow-based approaches that seek efficient reconstruction or prediction from randomized measurements \cite{HuangKuengPreskill2020,GrierPashayanSchaeffer2024,QinLukensKirbyZhu2025}. These developments highlight that pure state distinguishability is both a structural and a physically meaningful problem, sitting at the interface of quantum measurement, tomography, and information extraction.\\
\\
A quantum channel describes the evolution of a quantum state in the presence of noise or environmental interaction. Once a pure state passes through such a channel, some information may be lost, redistributed, or transferred to the environment. It is therefore natural to ask whether the channel output still contains enough information to recover the input pure state, at least up to its physically irrelevant global phase. This leads to the notion of phase retrievability for quantum channels. Recently, this notion was introduced and studied in \cite{LiuHan2025}, and it was further examined for twirling channels in \cite{HanLiu2024}. Related recent work on quantum process tomography and channel learning has also emphasized the broader problem of reconstructing or characterizing unknown quantum channels from limited access and noisy data \cite{MeleEtAl2025,ChenYuZhang2025,GrossEtAl2025,JayakumarEtAl2024}. \\
\\
One guiding principle of the present paper is that phase retrievability can be reformulated structurally in terms of complementary channels. In quantum information theory, many properties of a channel become more transparent when viewed from the environmental side of the evolution, as captured by its complementary channels. This viewpoint has proved useful in several related settings, including the study of mixed unitary channels and the structure of reversible quantum channels \cite{GirardEtAl2022,LiLuoSun2026}. From this perspective, our first main result shows that phase retrievability is equivalent to pure-state informational completeness of a complementary channel. Thus the problem may be studied through the operator system generated by complementary Kraus products.\\
\\
The second guiding principle is that the problem also admits a natural interpretation in terms of operator-valued frames. Frame theory gives a convenient language for describing the relation among operator families, POVMs, and structured quantum channels \cite{KaftalLarsonZhang2009,HanLiMengTang2011,FDTT2024,HanJuste2019}. In our setting, both the Kraus operators and the pullbacks of output POVMs through the adjoint channel naturally live on the input space. This makes it possible to study phase retrievability not only at the level of state evolution, but also through the associated operator-valued measurement structure.\\
\\
 The novelty of this paper is that we do not stop at a static characterization. Instead, we introduce a physically motivated mechanism for enhancing phase retrievability by means of a two-arm quantum interferometer. Interferometric studies of quantum processes have shown that coherence information about completely positive maps can be accessed through path interference \cite{Oi03,OiAberg2006}. Motivated by this viewpoint and some recent work \cite{TanRohde2019, AbbottWechsHorsmanMhallaBranciard2020, BranciardEtAl2021}, we consider two arm channels and their coherent recombination inside an interferometer, which at the Kraus level leads to operators of the form
\begin{equation}
M_i(\theta)=A_i+e^{i\theta}B_i.
\end{equation}
Mathematically, this may be viewed as a coherent coupling of operator-valued frames, and the resulting interference cross terms provide a natural mechanism for enlarging the complementary operator system.\\
\\
More precisely, the paper has four main contributions. First, we characterize phase retrievability of quantum channels in terms of complementary channels and pure state informational completeness. Second, we derive necessary conditions for phase retrievability in terms of complementary operator system dimension, and discuss their consequences for entanglement breaking and twirling channels. Third, we introduce an operator-valued frame coupling as the underlying mathematical mechanism, and show that quantum interference in a two-arm interferometer provides a natural physical realization of this construction. Finally, we introduce a quantitative index and use it to study explicit examples illustrating the enhancement phenomenon.\\
\\

\medskip\noindent
\textbf{Organization:}\\
\noindent The remainder of the paper is organized as follows. Section~2 collects the necessary background on 
quantum channels, phase retrievability, operator-valued frames, and quantum interferometry. 
Section~3 develops the complementary channel characterization of phase retrievability and several 
structural consequences. Section~4 introduces the quantum interferometric coupling, and explains its 
theoretical and physical significance. Section~5 presents a quantitative injectivity index together with 
illustrative examples. Section~6 concludes with a summary and outlook.

\section{Preliminaries}

\noindent Throughout this paper, all Hilbert spaces are complex finite-dimensional spaces. 
For Hilbert spaces $\mathcal{H}$ and $\mathcal{K}$, we write $B(\mathcal{H},\mathcal{K})$ 
as the space of all linear operators from $\mathcal{H}$ to $\mathcal{K}$, and 
$B(\mathcal{H}) = B(\mathcal{H},\mathcal{H})$. The space of all self-adjoint operators on $\mathcal{H}$ is denoted by $B_{sa}(\mathcal{H})$. The Hilbert-Schmidt inner product on 
$B(\mathcal{H})$ is given by
\begin{equation}
\langle X,Y\rangle = \operatorname{Tr}(Y^{*}X), \qquad X,Y\in B(\mathcal{H}).
\end{equation}

\subsection{Quantum channels and phase retrievability}

A quantum state on $\mathcal{H}$ is a positive semidefinite operator $\rho\in B(\mathcal{H})$ satisfying 
$\operatorname{Tr}(\rho)=1$. A quantum state is called \emph{pure state} if it is an extreme point of the convex 
set of all states. In finite dimensions, this is equivalent to saying that $\rho$ is a rank-one 
orthogonal projection. Thus every pure state has the form $\rho_x = |x\rangle\langle x|$ for some unit vector $x\in \mathcal{H}$.\\
\\
A quantum channel $\Phi : B(\mathcal{H}) \to B(\mathcal{K})$ is a completely positive 
trace preserving (CPTP) map. The adjoint map of $\Phi$ is the linear map $\Phi^{*}:B(\mathcal{K})\to B(\mathcal{H})$ defined by
\begin{equation}
\langle \Phi(X),Y\rangle = \langle X,\Phi^{*}(Y)\rangle,
\qquad X\in B(\mathcal{H}), \ Y\in B(\mathcal{K}),
\end{equation}
with respect to the Hilbert-Schmidt inner product. A linear map 
$\Psi:B(\mathcal{K})\to B(\mathcal{H})$ is called \emph{unital} if $\Psi(I_{\mathcal K}) = I_{\mathcal H}$. Immediately, if $\Phi$ is trace preserving, then its adjoint map $\Phi^{*}$ is unital.\\
\\
Every completely positive map admits a Kraus representation. In particular, if 
$\Phi : B(\mathcal{H}) \to B(\mathcal{K})$ is a quantum channel, then there exist operators 
$A_1,\dots,A_m \in B(\mathcal{H},\mathcal{K})$ such that
\begin{equation}
\Phi(\rho)=\sum_{i=1}^{m} A_i \rho A_i^{*}, \qquad \rho\in B(\mathcal{H}).
\end{equation}
Moreover, $\Phi$ is trace preserving if and only if $\sum_{i=1}^{m} A_i^{*}A_i = I_{\mathcal H}$, and the adjoint map $\Phi^*$ is given by $\Phi^*(\rho)=\sum_{i=1}^{m} A_i^* \rho A_i$. \\
\\
Different Kraus representations of the same channel are obtained from one another by unitary mixing of the Kraus operators. More precisely, it is described by following standard result.

\begin{theorem}\cite{NC00}\label{kraus theorem}
Let $\Phi : B(\mathcal{H}) \to B(\mathcal{K})$ be a quantum channel, and suppose that $\Phi(\rho)=\sum_{i=1}^{m} A_i \rho A_i^{*}$ and $\Phi(\rho)=\sum_{j=1}^{n} B_j \rho B_j^{*}$ are two Kraus representations of $\Phi$. Then, after padding the shorter list with zero operators so that $m=n$, there exists a unitary matrix $U=(u_{ij})\in M_n(\mathbb{C})$ such that $A_i=\sum_{j=1}^{n} u_{ij} B_j, \, i=1,\dots,n.$
\end{theorem}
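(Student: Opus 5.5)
We prove the unitary freedom of Kraus representations by encoding each Kraus family as a purification-type vector in $\mathcal{K}\otimes\mathbb{C}^n$ and using the fact that two families of vectors with the same Gram data (the same "outer product sum") are related by a unitary. Pad both lists with zeros so that $m=n$. Fix an orthonormal basis $\{e_k\}$ of $\mathcal{H}$ and the unnormalized maximally entangled vector $\Omega=\sum_k e_k\otimes e_k\in\mathcal{H}\otimes\mathcal{H}$. The Choi operator is
\[
C_\Phi=(\Phi\otimes \mathrm{id})(|\Omega\rangle\langle\Omega|)=\sum_{i=1}^n |a_i\rangle\langle a_i| =\sum_{j=1}^n |b_j\rangle\langle b_j|,
\]
where $a_i=(A_i\otimes I)\Omega$ and $b_j=(B_j\otimes I)\Omega$ in $\mathcal{K}\otimes\mathcal{H}$. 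The map $A\mapsto (A\otimes I)\Omega$ is a linear bijection from $B(\mathcal{H},\mathcal{K})$ onto $\mathcal{K}\otimes\mathcal{H}$ (it is the vectorization). Hence it suffices to show that $a_i=\sum_j u_{ij}b_j$ for some unitary $U$; applying the inverse of vectorization then gives $A_i=\sum_j u_{ij}B_j$.

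The key lemma is that if $\sum_i |a_i\rangle\langle a_i|=\sum_j|b_j\rangle\langle b_j|$ with equally many vectors, then $a_i=\sum_j u_{ij}b_j$ for a unitary $U$. To prove it, define $V_a:\mathbb{C}^n\to \mathcal{K}\otimes\mathcal{H}$ by $V_a e_i=a_i$, and similarly $V_b e_j=b_j$. Then $V_aV_a^*=C_\Phi=V_bV_b^*$. Taking polar decompositions $V_a^*=W_a C_\Phi^{1/2}$ and $V_b^*=W_b C_\Phi^{1/2}$ on the range of $C_\Phi$, we get $V_a^*=W V_b^*$, where $W$ is a partial isometry from $\operatorname{ran}V_b^*$ onto $\operatorname{ran}V_a^*$. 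Both ranges are subspaces of $\mathbb{C}^n$ of the same dimension, namely $\operatorname{rank}C_\Phi$. We can therefore extend $W$ to a unitary $\widetilde W$ on $\mathbb{C}^n$ by mapping the orthogonal complements isometrically onto one another; this is exactly where the equal list length $n$ is used. Thus $V_a=V_b\widetilde W^*$, and reading off columns gives $a_i=\sum_j (\widetilde W^*)_{ji}b_j$. Setting $u_{ij}=(\widetilde W^*)_{ji}$, i.e. $U=\overline{\widetilde W}$, produces a unitary with the required property.

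The main obstacle is the extension step: ensuring that the partial isometry relating the two factorizations of $C_\Phi$ extends to a full unitary. This relies on the padding to equal length, so that the kernels $\ker V_a$ and $\ker V_b$ have equal dimension $n-\operatorname{rank}C_\Phi$. Everything else is bookkeeping with the vectorization isomorphism, together with the observation that the identity $\sum_i A_i\rho A_i^*=\sum_j B_j\rho B_j^*$ for all $\rho$ transfers to equality of Choi operators. That transfer follows by applying the identity to the matrix units $|e_k\rangle\langle e_l|$.
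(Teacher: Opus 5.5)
The paper does not prove this statement itself; it only cites Nielsen and Chuang. Your argument is correct and is essentially the standard proof from that source: you pass to the vectors $(A_i\otimes I)\Omega$ and $(B_j\otimes I)\Omega$, which give two decompositions of the same Choi operator, and then apply unitary freedom for such decompositions. The only difference is cosmetic. You establish that freedom through polar decompositions of $V_a^*$ and $V_b^*$, taking $W=W_aW_b^*$ and extending it by an isometry $\ker V_b\to\ker V_a$, which is where the padding to equal length $n$ enters. The textbook instead expands both families in a scaled eigenbasis of $C_\Phi$, and both routes are equally valid.
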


\noindent A quantum channel may equivalently be described by a Stinespring dilation. Indeed, for every 
quantum channel $\Phi:B(\mathcal{H})\to B(\mathcal{K})$, there exist a Hilbert space $\mathcal{Z}$ 
and an isometry $V:\mathcal{H}\to \mathcal{K}\otimes\mathcal{Z}$ such that
\begin{equation}
\Phi(\rho)=\operatorname{Tr}_{\mathcal{Z}}(V\rho V^{*}), \qquad \rho\in B(\mathcal{H}).
\end{equation}
The associated complementary channel is then defined by tracing out the output space:
\begin{equation}
\Phi^{c}(\rho)=\operatorname{Tr}_{\mathcal{K}}(V\rho V^{*}), \qquad \rho\in B(\mathcal{H}).
\end{equation}
Thus $\Phi$ and $\Phi^{c}$ are obtained from the same isometric evolution by tracing out different 
subsystems. The channel $\Phi$ describes the output seen in $\mathcal{K}$, while $\Phi^{c}$ records 
the information transferred to the environment $\mathcal{Z}$. Any two complementary channels 
arising from Stinespring dilations of the same channel are unitarily equivalent on the environment 
space.\\
\\
A positive operator-valued measure (POVM) on $\mathcal{K}$ is a finite set of positive 
operators $\{F_j\}_{j=1}^{N}\subset B(\mathcal{K})$ such that $\sum_{j=1}^{N} F_j = I_{\mathcal K}$. Let $\Phi:B(\mathcal{H})\to B(\mathcal{K})$ be a quantum channel, then for any POVM 
$\{F_j\}_{j=1}^{N}$ on $\mathcal{K}$, the set $\{\Phi^{*}(F_j)\}_{j=1}^{N}$ is again a POVM on 
$\mathcal{H}$, since $\Phi^{*}$ is unital. We now recall the notion of phase retrievability for 
quantum channels.

\begin{definition}\cite{LiuHan2025}
A quantum channel $\Phi:B(\mathcal{H})\to B(\mathcal{K})$ is called \emph{phase retrievable} if 
there exists a POVM $\{F_j\}_{j=1}^{N}$ on $\mathcal{K}$ such that $\operatorname{Tr}(\rho_x\,\Phi^{*}(F_j))
=\operatorname{Tr}(\rho_y\,\Phi^{*}(F_j)),
\; j=1,\dots,N,$ implies $\rho_x=\rho_y$ for all unit vectors $x,y\in \mathcal{H}$.
\end{definition}

\subsection{Operator-valued frames and their connection with quantum channels}

We briefly recall the definition of operator-valued frames. A finite sequence 
$\mathcal{A}=\{A_i\}_{i=1}^{m}\subset B(\mathcal{H},\mathcal{K})$ is called an 
\emph{operator-valued frame} for $\mathcal{H}$ if there exist constants $\alpha,\beta>0$ such that
\begin{equation}
\alpha \|x\|^{2}
\leq
\sum_{i=1}^{m} \|A_i x\|^{2}
\leq
\beta \|x\|^{2},
\qquad x\in \mathcal{H}.
\end{equation}
Equivalently,
\begin{equation}
\alpha I_{\mathcal H}
\leq
\sum_{i=1}^{m} A_i^{*}A_i
\leq
\beta I_{\mathcal H}.
\end{equation}
The operator $S_{\mathcal A}=\sum_{i=1}^{m} A_i^{*}A_i$ is called the \emph{frame operator} of $\mathcal{A}$. The sequence $\mathcal{A}$ is called 
\emph{tight} if $S_{\mathcal A}=cI_{\mathcal H}$ for some $c>0$, and \emph{Parseval} if $S_{\mathcal A}=I_{\mathcal H}$.\\
\\
Further, a sequence $\mathcal{A}=\{A_i\}_{i=1}^{m}\subset B(\mathcal{H},\mathcal{K})$ is called 
\emph{phase retrievable} if $\|A_i x\|=\|A_i y\|, \; i=1,\dots,m,$ implies $\rho_x=\rho_y$ for all unit vectors $x,y\in \mathcal{H}$. Since
\begin{equation}
\|A_i x\|^{2}=\langle A_i^{*}A_i x,x\rangle,
\end{equation}
this is equivalent to saying that the positive operators $\{A_i^{*}A_i\}_{i=1}^{m}$ separate pure 
states on $\mathcal{H}$. In the special case $\mathcal{K}=\mathbb{C}$, this reduces to the usual 
notion of phase retrieval for vector frames.\\
\\
The framework of operator-valued frames is naturally related to quantum channels. Indeed, If $\Phi$ is a quantum channel with Kraus representation $\Phi(\rho)=\sum_{i=1}^{m} A_i \rho A_i^{*}$, then we have $\sum_{i=1}^{m} A_i^{*}A_i=I_{\mathcal H}$ since $\Phi$ is trace preserving. Hence every Kraus sequence of a quantum channel is a Parseval operator-valued frame.\\
\\
On the other hand, let $\Phi:B(\mathcal{H})\to B(\mathcal{K})$ be a quantum channel and let 
$\{F_i\}_{i=1}^{N}$ be a POVM on $\mathcal{K}$. Since $\Phi^{*}$ is unital, the sequence 
$\{\Phi^{*}(F_i)\}_{i=1}^{N}$ is a POVM on $\mathcal{H}$. Consequently, 
$\{\Phi^{*}(F_i)^{1/2}\}_{i=1}^{N}$ is a Parseval operator-valued frame for $\mathcal{H}$. Moreover,
\begin{equation}
\|\Phi^{*}(F_i)^{1/2}x\|^{2}
=
\langle \Phi^{*}(F_i)x,x\rangle
=
\operatorname{Tr}(\rho_x\,\Phi^{*}(F_i)).
\end{equation}
Therefore $\Phi$ is phase retrievable if and only if there exists a POVM 
$\{F_i\}_{i=1}^{N}$ on $\mathcal{K}$ such that the Parseval operator-valued frame 
$\{\Phi^{*}(F_i)^{1/2}\}_{i=1}^{N}$ is phase retrievable on $\mathcal{H}$.

\subsection{Quantum interferometers and interference of completely positive maps}

We now recall the interferometric background that motivates the discussion in Section 4. A 
quantum interferometer is a device that coherently splits an incoming wave into different paths, 
allows the amplitudes in those paths to evolve separately, and then recombines them. Because the 
recombination takes place at the amplitude level, the output intensities depend on the relative phase 
between the paths. This phase-dependent behavior is called \emph{interference}. In quantum theory, 
interferometers are standard tools for probing coherence properties of states and processes. A 
familiar example is the two-path Mach-Zehnder interferometer.\\
\\
A schematic two-path interferometer is shown in Figure~\ref{fig:interferometer}. After the first 
beam splitter (BS), the input amplitude is divided coherently into a lower arm and an upper arm. The two 
amplitudes then evolve separately, a relative phase may be inserted in one arm, and the paths are 
finally recombined at the second beam splitter. The output ports record the resulting interference 
pattern.

\begin{figure}[htbp]
\centering
\begin{tikzpicture}[>=Latex, thick, every node/.style={font=\small}]

\draw[->] (-2,0) -- (-1.1,0);
\node[left] at (-2,0) {$|\psi\rangle$};

\draw (-1.1,-0.55) rectangle (-0.8,0.55);
\node[rotate=90] at (-0.95,0) {\small BS};

\draw[->] (-0.8,0.32) -- (0.2,1.4);
\draw[->] (-0.8,-0.32) -- (1.2,-1.4);

\draw (0.4,1.05) rectangle (1.8,1.75);
\node at (1.1,1.4) {$\Phi_B$};

\draw (2.2,1.05) rectangle (3.5,1.75);
\node at (2.85,1.4) {$e^{i\theta}$};

\draw (1.4,-1.75) rectangle (2.7,-1.05);
\node at (2,-1.4) {$\Phi_A$};

\draw[->] (1.8,1.4) -- (2.2,1.4);
\draw[->] (3.5,1.4) -- (4.7,0.32);
\draw[->] (2.7,-1.4) -- (4.7,-0.32);

\draw (4.7,-0.55) rectangle (5.0,0.55);
\node[rotate=90] at (4.85,0) {\small BS};

\draw[->] (5.0,0.32) -- (6.2,1.15);
\draw[->] (5.0,-0.32) -- (6.2,-1.15);

\node[right] at (6.2,1.15) {port 0};
\node[right] at (6.2,-1.15) {port 1};

\end{tikzpicture}
\caption{A two-path quantum interferometer. The lower arm carries the process $\Phi_A$, the upper 
arm carries the process $\Phi_B$, and a relative phase $e^{i\theta}$ is inserted in the upper arm 
before recombination.}
\label{fig:interferometer}
\end{figure}
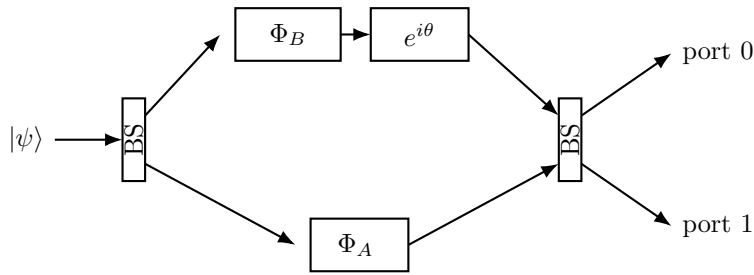

\noindent To describe the simplest case mathematically, let the path degree of freedom be represented by the 
orthonormal basis $\{|0\rangle,|1\rangle\}$, where $|0\rangle$ corresponds to the lower arm and 
$|1\rangle$ corresponds to the upper arm. If the internal state is $|\psi\rangle\in\mathcal{H}$ and 
the two arms carry unitary evolutions $U_A$ and $U_B$, then after inserting a relative phase 
$e^{i\theta}$ in the upper arm, the state before the second beam splitter has the schematic form
\begin{equation}
\frac{1}{\sqrt{2}}
\left(
|0\rangle\otimes U_A|\psi\rangle
+
e^{i\theta}|1\rangle\otimes U_B|\psi\rangle
\right).
\end{equation}
After recombination, the probability of exiting through one output port is
\begin{equation}
P_{0}(\theta)
=
\frac{1}{2}\bigl(1+v\cos(\theta-\alpha)\bigr),
\end{equation}
where $ve^{i\alpha}=\langle \psi|U_A^{*}U_B|\psi\rangle$.\\
\\
Thus the interference pattern is determined by the overlap of the two arm amplitudes. For a mixed 
input state $\rho$, the same quantity is $ve^{i\alpha}=\operatorname{Tr}(\rho\,U_A^{*}U_B)$.\\
\\
This picture can be extended to general quantum processes. In the two arms of an interferometer, the 
evolution need not be unitary on the internal system alone, since the system may interact with an 
environment. Daniel K. L. Oi showed that the interference of two quantum channels can be modeled by 
choosing coherent Stinespring implementations for the channels acting in the two arms \cite{Oi03}. 
Let the lower arm carry the channel $\Phi_A:B(\mathcal{H})\to B(\mathcal{K})$ and the upper arm 
carry the channel $\Phi_B:B(\mathcal{H})\to B(\mathcal{K})$. Choose Hilbert spaces 
$\mathcal{E}_A,\mathcal{E}_B$ and isometries
\begin{equation}
V_A:\mathcal{H}\to \mathcal{K}\otimes \mathcal{E}_A,
\qquad
V_B:\mathcal{H}\to \mathcal{K}\otimes \mathcal{E}_B
\end{equation}
such that
\begin{equation}
\Phi_A(\rho)=\operatorname{Tr}_{\mathcal E_A}(V_A\rho V_A^{*}),
\qquad
\Phi_B(\rho)=\operatorname{Tr}_{\mathcal E_B}(V_B\rho V_B^{*}).
\end{equation}
After choosing orthonormal bases in $\mathcal{E}_A$ and $\mathcal{E}_B$, the channels may be 
written in Kraus form as
\begin{equation}
\Phi_A(\rho)=\sum_{i=1}^{m} A_i \rho A_i^{*},
\qquad
\Phi_B(\rho)=\sum_{i=1}^{m} B_i \rho B_i^{*},
\end{equation}
where the shorter Kraus list is padded with zero operators if necessary.\\
\\
For a fixed coherent implementation, the interference depends on cross terms between the Kraus 
operators in the two arms. In particular, expressions of the form
\begin{equation}
\operatorname{Tr}(A_i^{*}B_i\,\rho)
\end{equation}
appear naturally in the visibility formula \cite{Oi03}. The important point for the present paper is 
that the resulting interference pattern depends not only on the input-output action of the channels, 
but also on the chosen coherent implementation, and hence on the Kraus realization.\\
\\
We use two facts below: interference produces cross terms, and the resulting map depends on the chosen coherent Kraus-operator implementation.

\section{Phase Retrievability and Complementary Operator Systems}

\subsection{Characterization of phase retrievability via complementary channels}

\noindent Our later discussion of operator-valued frame coupling is formulated at the level of Kraus families of quantum channels. For this reason, whenever needed, we will fix a Kraus representation of the given quantum channel and define the associated constructions relative to that choice.\\
\\
We first rewrite the definition of complementary channels via Kraus representations.
\begin{definition}
Let $\Phi_{\mathcal A}(\rho)=\sum_{i=1}^{r} A_i \rho A_i^{*}$ be a Kraus representation of a quantum channel $\Phi$ from $B(\mathcal{H})$ to $B(\mathcal{K})$ with Kraus operators $\mathcal A=\{A_i\}_{i=1}^{r}\subseteq B(\mathcal{H},\mathcal{K})$. The complementary channel associated with this Kraus representation $\mathcal A$ is given by
\begin{equation}
\Phi_{\mathcal A}^{c}(\rho)
=
\sum_{i,j=1}^{r} \operatorname{Tr}\!\left(A_i \rho A_j^{*}\right)\, |i\rangle\langle j|,
\end{equation}
where $\{|i\rangle\}_{i=1}^{r}$ is the standard orthonormal basis of $\mathbb{C}^{r}$.
\end{definition}

\begin{remark}\label{canonical}
For a fixed Kraus family $\mathcal A=\{A_i\}_{i=1}^{r}$, the complementary channel $\Phi_{\mathcal A}^{c}$ is well defined, but its Kraus representation is not unique by Theorem \ref{kraus theorem}. Fix an orthonormal basis $\{|a\rangle\}_{a=1}^{d_\mathcal{K}}$ of $\mathcal{K}$ with $d_{\mathcal{K}}=\dim \mathcal{K}$.
Then, it is straightforward to verify that
\begin{equation}
\Phi_{\mathcal A}^{c}(\rho)=\sum_{a=1}^{d_\mathcal{K}} R_a \rho R_a^{*},
\end{equation}
where $R_a=\sum_{i=1}^{r} |i\rangle \langle a| A_i$.\\
\\
We shall call $\mathcal R_{\mathcal A}:=\{R_a\}_{a=1}^{d_\mathcal{K}}$ the \emph{canonical Kraus representation} of $\Phi_{\mathcal A}^{c}$ associated with $\mathcal A$. Unless otherwise specified, all later discussions of Kraus operators of the complementary channel will be with respect to this canonical Kraus representation.
\end{remark}

\begin{definition}
Let $\mathcal E=\{E_{\alpha}\}_{\alpha\in\Lambda}\subseteq M_d$ be a family of operators. For all unit vectors $x, y \in \mathbb{C}^d$, we say that $\mathcal E$ is \emph{pure-state informationally complete} (PSIC) if $\langle E_{\alpha}x,x\rangle=\langle E_{\alpha}y,y\rangle \quad \text{for all } \alpha\in\Lambda$ implies $x=e^{i\theta}y$ for some $\theta\in\mathbb{R}$.
\end{definition}

\begin{definition}\label{PSIC channel}
A quantum channel $\Phi$ is called \emph{pure-state informationally complete} (PSIC) if there exists a Kraus representation $\Phi_{\mathcal A}(\rho)=\sum_{i=1}^{r} A_i \rho A_i^{*}$ with Kraus operators $\mathcal A=\{A_i\}_{i=1}^{r}$ such that the set $\{A_j^{*}A_i : 1\leq i,j\leq r\}$ is pure-state informationally complete.
\end{definition}

\noindent The notion of pure-state injectivity provides a convenient Schr\"odinger-picture formulation of phase retrievability. Following our previous work in \cite{LiuHan2025}, a quantum channel $\Phi : B(\mathcal{H})\to B(\mathcal{K})$ is said to be \emph{pure-state injective} if $\Phi(\rho_x)=\Phi(\rho_y)$ implies $\rho_x=\rho_y.$ The next proposition shows that this condition is equivalent to phase retrievability.

\begin{proposition}\label{pure injective}\cite{LiuHan2025}
A quantum channel $\Phi : B(\mathcal{H})\to B(\mathcal{K})$ is phase retrievable if and only if $\Phi$ is pure-state injective.
\end{proposition}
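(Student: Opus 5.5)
We show both implications by way of the Schr\"odinger-picture meaning of the definition. Phase retrievability of $\Phi$ asks for a single POVM $\{F_j\}$ on $\mathcal K$ such that the numbers $\operatorname{Tr}(\rho_x\Phi^*(F_j)) = \operatorname{Tr}(\Phi(\rho_x)F_j)$ separate pure states. Pure-state injectivity asks that $\rho_x\mapsto\Phi(\rho_x)$ be injective on pure states.

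\emph{Phase retrievable $\Rightarrow$ pure-state injective.} Suppose $\{F_j\}$ witnesses phase retrievability and $\Phi(\rho_x)=\Phi(\rho_y)$. Then $\operatorname{Tr}(\Phi(\rho_x)F_j)=\operatorname{Tr}(\Phi(\rho_y)F_j)$ for every $j$. The duality $\langle\Phi(X),Y\rangle=\langle X,\Phi^*(Y)\rangle$, with $F_j$ self-adjoint, turns this into $\operatorname{Tr}(\rho_x\Phi^*(F_j))=\operatorname{Tr}(\rho_y\Phi^*(F_j))$. Hence $\rho_x=\rho_y$.

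\emph{Pure-state injective $\Rightarrow$ phase retrievable.} Here we construct one POVM on $\mathcal K$ that is informationally complete for all states on $\mathcal K$. Such a POVM determines $\Phi(\rho_x)$ from its statistics, and injectivity then gives $\rho_x$.

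\begin{enumerate}
\item Pick a basis $\{G_1,\dots,G_{d^2}\}$ of $B_{sa}(\mathcal K)$ consisting of positive operators, where $d=\dim\mathcal K$. One choice is the matrix units $|a\rangle\langle a|$ together with the projections onto $(|a\rangle+|b\rangle)/\sqrt2$ and $(|a\rangle+i|b\rangle)/\sqrt2$ for $a<b$. These span $B_{sa}(\mathcal K)$ by a standard polarization argument.
\item Set $S=\sum_k G_k$. This operator is positive and in fact invertible: $S\ge\sum_a|a\rangle\langle a|=I$.
\item Define $F_k=S^{-1/2}G_kS^{-1/2}$. These are positive and sum to $I_{\mathcal K}$, so $\{F_k\}$ is a POVM.
\item Since $X\mapsto S^{-1/2}XS^{-1/2}$ is a linear bijection of $B_{sa}(\mathcal K)$, the $F_k$ still span $B_{sa}(\mathcal K)$.
\item Suppose $\operatorname{Tr}(\rho_x\Phi^*(F_k))=\operatorname{Tr}(\rho_y\Phi^*(F_k))$ for all $k$. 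Duality gives $\operatorname{Tr}\big((\Phi(\rho_x)-\Phi(\rho_y))F_k\big)=0$ for all $k$.
\item The difference $\Phi(\rho_x)-\Phi(\rho_y)$ is self-adjoint and Hilbert--Schmidt orthogonal to a spanning set of $B_{sa}(\mathcal K)$. Orthogonality to $B_{sa}$ extends to all of $B(\mathcal K)=B_{sa}+iB_{sa}$, so the difference is $0$.
\item Pure-state injectivity now yields $\rho_x=\rho_y$.
\end{enumerate}

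The main point needing care is step 1, namely producing a \emph{positive} spanning set whose normalization still spans. The polarization identities $|a\rangle\langle b|+|b\rangle\langle a|$ and $i(|a\rangle\langle b|-|b\rangle\langle a|)$, each written as a combination of the listed projections, settle the spanning claim. The rest is routine linear algebra.
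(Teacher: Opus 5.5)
Your proof is correct, and it supplies something the paper does not: the paper gives no proof of this proposition. It imports the result from \cite{LiuHan2025} and uses it as a black box, so there is no in-text argument to compare your route against.

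Both of your directions check out.
\begin{itemize}
\item The forward direction is the duality $\operatorname{Tr}(\Phi(\rho)F_j)=\operatorname{Tr}(\rho\,\Phi^*(F_j))$. This also uses that $\Phi^*(F_j)=\sum_i A_i^*F_jA_i$ is self-adjoint, which is automatic.
\item The converse reduces to exhibiting an informationally complete POVM on $\mathcal K$. Your normalization $F_k=S^{-1/2}G_kS^{-1/2}$ of the positive basis $\{G_k\}$ does this correctly. The count $\dim\mathcal K+\dim\mathcal K(\dim\mathcal K-1)=(\dim\mathcal K)^2$, the polarization identities, and $S\ge I$ are all right.
\end{itemize}

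Your argument also makes explicit two points the citation hides.
\begin{itemize}
\item The witnessing POVM can be fixed once and for all on $\mathcal K$, independently of $\Phi$, with $(\dim\mathcal K)^2$ outcomes.
\item You never use trace preservation, only the duality and the fact that $\Phi$ maps self-adjoint operators to self-adjoint operators. So the same proof covers the paper's later use of the equivalence for the non-trace-preserving port maps $\Psi_\theta$ in Section~4. An example is the step ``$\Psi_\theta$ is not pure-state injective, and thus not phase retrievable,'' which the paper asserts without re-justifying the proposition for completely positive maps.
\end{itemize}
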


\noindent By Proposition~\ref{pure injective}, phase retrievability is equivalent to pure-state injectivity. We now investigate when a quantum channel could be phase retrievable in terms of its complementary channel associated with a fixed Kraus representation.

\begin{theorem}\label{PR iff PSIC}
Let $\Phi_{\mathcal A}(\rho)=\sum_{i=1}^{r} A_i \rho A_i^{*}$ be a Kraus representation of a quantum channel $\Phi$, and let $\Phi_{\mathcal A}^{c}$ be the complementary channel associated with $\mathcal A=\{A_i\}_{i=1}^{r}$. Then $\Phi$ is phase retrievable if and only if $\Phi_{\mathcal A}^{c}$ is pure-state informationally complete.
\end{theorem}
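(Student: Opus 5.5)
By Proposition~\ref{pure injective}, it suffices to show that $\Phi$ is pure-state injective exactly when $\Phi_{\mathcal A}^{c}$ is PSIC in the sense of Definition~\ref{PSIC channel}. The plan is to make the complementary Kraus products $R_b^{*}R_a$ explicit, use them to rewrite the output $\Phi(\rho_x)$ entrywise, and then handle the existential quantifier over Kraus representations in Definition~\ref{PSIC channel}.

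\textbf{Step 1: the complementary Kraus products.} Take the canonical Kraus representation $R_a=\sum_{i}|i\rangle\langle a|A_i$ from Remark~\ref{canonical}. A direct computation gives
\begin{equation*}
R_b^{*}R_a=\sum_{i} A_i^{*}|b\rangle\langle a|A_i ,
\qquad\text{so}\qquad
\langle R_b^{*}R_a x,x\rangle=\sum_i \langle a|A_i\rho_xA_i^{*}|b\rangle=\langle a|\Phi(\rho_x)|b\rangle .
\end{equation*}
Thus the numbers $\langle R_b^{*}R_ax,x\rangle$, for $1\le a,b\le d_{\mathcal K}$, are exactly the matrix entries of $\Phi(\rho_x)$ in the fixed basis of $\mathcal K$. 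Consequently, $\{R_b^{*}R_a\}$ is PSIC if and only if $\Phi(\rho_x)=\Phi(\rho_y)$ forces $x=e^{i\theta}y$. That last condition is pure-state injectivity of $\Phi$.

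\textbf{Step 2: the implication from phase retrievability.} If $\Phi$ is phase retrievable, Proposition~\ref{pure injective} makes it pure-state injective. Step 1 then shows the canonical representation is a Kraus representation of $\Phi_{\mathcal A}^{c}$ whose products form a PSIC family. Hence $\Phi_{\mathcal A}^{c}$ is PSIC.

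\textbf{Step 3: the converse, independence of the Kraus representation.} Definition~\ref{PSIC channel} only provides \emph{some} Kraus family $\{C_k\}$ of $\Phi_{\mathcal A}^{c}$ whose products $C_l^{*}C_k$ are PSIC. By Theorem~\ref{kraus theorem}, after padding with zeros, $C_k=\sum_a u_{ka}R_a$ for some unitary $U=(u_{ka})$. Then
\begin{equation*}
C_l^{*}C_k=\sum_{a,b}\overline{u_{lb}}\,u_{ka}\,R_b^{*}R_a ,
\end{equation*}
where the padded zero operators contribute nothing. Each $C_l^{*}C_k$ is therefore a linear combination of the $R_b^{*}R_a$.

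Now suppose $\langle R_b^{*}R_ax,x\rangle=\langle R_b^{*}R_ay,y\rangle$ for all $a,b$. By linearity the same equalities hold for every $C_l^{*}C_k$, and PSIC of $\{C_l^{*}C_k\}$ gives $x=e^{i\theta}y$. So the canonical family is PSIC, and Step 1 yields pure-state injectivity of $\Phi$, hence phase retrievability.

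\textbf{Main obstacle.} The real content lies in Step 3. One must see that PSIC is a property of the linear span of the product family. That span does not depend on the choice of Kraus representation, since the relation also inverts via $U^{*}$. The padding with zeros from Theorem~\ref{kraus theorem} needs care, but zero operators add nothing to the span.

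A different choice of Kraus family $\mathcal A$ for $\Phi$ changes $\Phi_{\mathcal A}^{c}$ only by unitary conjugation on the environment. This does not matter here, because the argument works for each fixed $\mathcal A$.
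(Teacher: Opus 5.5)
Your proof is correct and follows the paper's route. The paper also takes the canonical Kraus family $R_a$, uses $\langle x|R_b^{*}R_a|x\rangle=\langle a|\Phi_{\mathcal A}(\rho_x)|b\rangle$ (stated as $(\Phi_{\mathcal A}^{c})^{c}_{\mathcal R_{\mathcal A}}=\Phi_{\mathcal A}$) to equate pure-state injectivity with PSIC of $\{R_b^{*}R_a\}$, and then applies Proposition~\ref{pure injective}; your Step~3 supplies, via Theorem~\ref{kraus theorem}, the Kraus-independence argument the paper skips when it equates PSIC of the canonical family with $\Phi_{\mathcal A}^{c}$ being PSIC in the existential sense of Definition~\ref{PSIC channel}.
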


\begin{proof}
Let $\mathcal R_{\mathcal A}=\{R_a\}_{a=1}^{d_{\mathcal K}}$ be the canonical Kraus representation of $\Phi_{\mathcal A}^{c}$ associated with $\mathcal A$, where $R_a=\sum_{i=1}^{r} |i\rangle \langle a| A_i, \; 1\leq a\leq d_{\mathcal K},$ and $d_{\mathcal K}=\dim \mathcal K.$\\
\\
By Remark \ref{canonical}, we have $\Phi_{\mathcal A}^{c}(\rho)=\sum_{a=1}^{d_{\mathcal K}} R_a \rho R_a^{*}.$\\
\\
We first show that the complementary channel of $\Phi_{\mathcal A}^{c}$, taken with respect to the canonical Kraus representation $\mathcal R_{\mathcal A}$, recovers the original Kraus representation $\Phi_{\mathcal A}$. Indeed,
\begin{equation}\label{com of com}
\bigl(\Phi_{\mathcal A}^{c}\bigr)_{\mathcal R_{\mathcal A}}^{c}(\rho)
=
\sum_{a,b=1}^{d_{\mathcal K}} \operatorname{Tr}\!\left(R_a \rho R_b^{*}\right)\, |a\rangle\langle b|.
\end{equation}
With the definition of $R_a$, we have
\begin{equation}
R_a \rho R_b^{*}
=
\sum_{i,j=1}^{r} |i\rangle \langle a| A_i \rho A_j^{*} |b\rangle \langle j|.
\end{equation}
Hence
\begin{equation}
\operatorname{Tr}\!\left(R_a \rho R_b^{*}\right)
=
\sum_{i,j=1}^{r} \operatorname{Tr}\!\left(|i\rangle \langle a| A_i \rho A_j^{*} |b\rangle \langle j|\right)
=
\sum_{i=1}^{r} \langle a| A_i \rho A_i^{*} |b\rangle,
\end{equation}
and therefore
\begin{equation}\label{equ double com}
\bigl(\Phi_{\mathcal A}^{c}\bigr)_{\mathcal R_{\mathcal A}}^{c}(\rho)
=
\sum_{a,b=1}^{d_{\mathcal K}} \sum_{i=1}^{r} \langle a| A_i \rho A_i^{*} |b\rangle\, |a\rangle\langle b|
=
\sum_{i=1}^{r} A_i \rho A_i^{*}
=
\Phi_{\mathcal A}(\rho).
\end{equation}
Thus we have $\bigl(\Phi_{\mathcal A}^{c}\bigr)_{\mathcal R_{\mathcal A}}^{c}=\Phi_{\mathcal A}$.\\
\\
Now let $|x\rangle,|y\rangle \in \mathcal H$ be unit vectors and $\rho_x$, $\rho_y$ be the corresponding pure states. Then we have $\Phi_{\mathcal A}(\rho_x)=\Phi_{\mathcal A}(\rho_y)$ if and only if $\bigl(\Phi_{\mathcal A}^{c}\bigr)_{\mathcal R_{\mathcal A}}^{c}(\rho_x)=\bigl(\Phi_{\mathcal A}^{c}\bigr)_{\mathcal R_{\mathcal A}}^{c}(\rho_y).$\\
\\
By Equation\;\ref{com of com} and \ref{equ double com}, we have $\Phi_{\mathcal A}(\rho_x)=\Phi_{\mathcal A}(\rho_y)$ if and only if
\begin{equation}
\operatorname{Tr}\!\left(R_a \rho_x R_b^{*}\right)
=
\operatorname{Tr}\!\left(R_a \rho_y R_b^{*}\right)
\qquad \text{for all } 1\leq a,b\leq d_{\mathcal K}.
\end{equation}
Since $\operatorname{Tr}\!\left(R_a \rho_x R_b^{*}\right)=\langle x|R_b^{*}R_a|x\rangle$ and $\operatorname{Tr}\!\left(R_a \rho_y R_b^{*}\right)=\langle y|R_b^{*}R_a|y\rangle,$ we conclude that $\Phi_{\mathcal A}(\rho_x)=\Phi_{\mathcal A}(\rho_y)$ if and only if $\langle x|R_b^{*}R_a|x\rangle=\langle y|R_b^{*}R_a|y\rangle \; \text{for all } 1\leq a,b\leq d_{\mathcal K}.$\\
\\
Therefore, $\Phi_{\mathcal A}$ is pure-state injective if and only if the set $\{R_b^{*}R_a : 1\leq a,b\leq d_{\mathcal K}\}$ is pure-state informationally complete. By Definition \ref{PSIC channel}, this is equivalent to saying that $\Phi_{\mathcal A}^{c}$ is PSIC. Hence
\begin{equation}
\Phi_{\mathcal A} \text{ is pure-state injective }
\iff
\Phi_{\mathcal A}^{c} \text{ is PSIC.}
\end{equation}
Finally, Proposition~\ref{pure injective} yields
\begin{equation}
\Phi \text{ is phase retrievable }
\iff
\Phi_{\mathcal A}^{c} \text{ is PSIC},
\end{equation}
This completes the proof.
\end{proof}

\subsection{Complementary operator systems and necessary conditions for phase retrievability}
\noindent Theorem~\ref{PR iff PSIC} shows that phase retrievability of a quantum channel is determined by the pure-state informational completeness of its complementary channel. Consequently, general lower bounds for pure-state informational completeness yield necessary conditions for phase retrievability. In particular, we start with the following lower bound from \cite[Theorem 6]{HMW13}.

\begin{theorem}\label{lower bound HMW}\cite{HMW13}
Let $\mathfrak{m}$ denote the minimal number of self-adjoint operators that is informationally complete with respect to the set $\mathcal{P}_1$ of pure states in $\mathbb{C}^d$. Then informational completeness for pure states in $\mathbb{C}^d$ requires
\begin{equation}
\mathfrak{m}>
\begin{cases}
2D-2\alpha, & \text{for all } d>1,\\
2D-2\alpha+2, & \text{if } d \text{ is odd and } \alpha \equiv 3 \pmod{4},\\
2D-2\alpha+1, & \text{if } d \text{ is odd and } \alpha \equiv 2 \pmod{4},
\end{cases}
\end{equation}
where $\alpha$ denotes the number of $1$'s in the binary expansion of $d-1$, and $D=2d-2$ is the real dimension of the manifold of pure states.
\end{theorem}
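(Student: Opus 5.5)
This lower bound is Theorem 6 of \cite{HMW13}. It is quoted here rather than proved, but I would reproduce the argument of Heinosaari, Mazzarella and Wolf, which is topological.

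\textbf{Step 1: reduction to an embedding.} Let $E_1,\dots,E_{\mathfrak m}$ be self-adjoint operators that are informationally complete for pure states. Define
\[
f:\mathbb{CP}^{d-1}\to\mathbb{R}^{\mathfrak m},\qquad f(\rho_x)=\bigl(\operatorname{Tr}(\rho_x E_k)\bigr)_{k=1}^{\mathfrak m}.
\]
By hypothesis $f$ is injective, and it is smooth. A smooth injection from a compact space is a topological embedding. The difference of two pure states is a trace-zero self-adjoint operator of rank at most two. Using this, I would show that the differential of $f$ is injective as well, so $f$ is a smooth embedding.

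\textbf{Step 2: reduction to a hyperplane arrangement.} I would exploit the extra structure coming from linearity. Remove the component along the identity, since $\operatorname{Tr}\rho=1$ makes it uninformative. Then injectivity on pure states is equivalent to the following condition:
\[
\operatorname{span}\{E_k\}^{\perp}\cap\{\text{trace-zero self-adjoint } X \text{ with } \operatorname{rank}X\le 2\}=\{0\}.
\]
This lets one pass from an embedding of $\mathbb{CP}^{d-1}$ into $\mathbb{R}^{\mathfrak m-1}$ to the existence of a nonvanishing section, or equivalently an embedding, associated with the real projectivization of these rank-two operators.

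\textbf{Step 3: obstruction theory.} I would then apply known non-embedding and non-immersion results for complex projective space. The embedding bound of Mayer, Milgram and related authors uses Stiefel--Whitney and Pontryagin classes of $\mathbb{CP}^{d-1}$. The dual Stiefel--Whitney class is $\bar w(\mathbb{CP}^{d-1})=(1+a)^{-d}$, with $a$ in degree $2$. Its top nonvanishing degree is governed by the binary expansion of $d-1$, and this is where $\alpha$ enters. In the generic case it gives $\mathfrak m-1\ge 2D-2\alpha+\dots$, which yields the first line of the bound. The refined cases for $d$ odd and $\alpha\equiv 2,3\pmod 4$ come from Brown--Peterson/$KO$-theoretic improvements to the embedding dimension of $\mathbb{CP}^n$.

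The main obstacle is Step 3. Obtaining the exact constants, especially the refined odd-$d$ cases, requires citing deep embedding theorems for complex projective spaces rather than proving them. In the paper I would simply invoke \cite[Theorem 6]{HMW13} and combine it with Theorem~\ref{PR iff PSIC}.
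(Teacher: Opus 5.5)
The paper gives no proof of this statement. It quotes it from \cite[Theorem 6]{HMW13} and uses it only to define $N(d)$, so your final decision to invoke that theorem is exactly what the paper does.

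If you keep the supporting sketch, several steps need correcting.

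\textbf{Step 1} is correct and is all you need. A tangent vector at $\rho_x$ has the form $|x\rangle\langle v|+|v\rangle\langle x|$ with $0\neq v\perp x$. It equals $\|v\|(\rho_{u_+}-\rho_{u_-})$, where $u_\pm=(x\pm v/\|v\|)/\sqrt2$. So injectivity on pure states makes $f$ an injective immersion, hence a smooth embedding $\mathbb{CP}^{d-1}\hookrightarrow\mathbb{R}^{\mathfrak m}$.

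\textbf{Step 2}: the reformulation via rank-two traceless operators is right but unnecessary. The passage to $\mathbb{R}^{\mathfrak m-1}$ is wrong. Subtracting multiples of $I$ from the $E_k$ removes no coordinate. The image of $f$ lies in an affine hyperplane only if $I\in\operatorname{span}\{E_k\}$, and that never happens for a minimal family, since one $E_k$ would then be redundant on states. An embedding into $\mathbb{R}^{\mathfrak m-1}$ would give $\mathfrak m\ge 2D-2\alpha+2$. This already fails for $d=2$, where the three Pauli matrices give $\mathfrak m=3=2D-2\alpha+1$.

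\textbf{Step 3}: the dual Stiefel--Whitney class does not give the first line in general. Mod $2$ one has $(1+a)^{-d}=(1+a)^{2^L-d}$ in $H^*(\mathbb{CP}^{d-1};\mathbb{Z}_2)$, where $2^L$ is the least power of two $\ge d$. So these classes only exclude embeddings into $\mathbb{R}^{2^{L+1}-2}$. That agrees with $\mathbb{R}^{2D-2\alpha}$ only when $d-1$ is a power of two. For example, $\bar w(\mathbb{CP}^3)=(1+a)^{-4}=1$ gives no obstruction, yet the theorem asserts $\mathbb{CP}^3\not\hookrightarrow\mathbb{R}^{8}$; that case needs the Pontryagin class $\bar p_1=-4a^2\neq 0$. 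The general first line, like the refined odd-$d$ cases, rests on finer Pontryagin-class and $K$-theoretic integrality non-embedding theorems for complex projective spaces. That is a further reason to cite rather than re-derive.
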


\noindent In the following discussion, we shall denote the lower bound in Theorem~\ref{lower bound HMW} by $N(d)$. Thus any PSIC operator system must have dimension strictly larger than $N(d)$.\\
\\
\noindent For a quantum channel $\Phi$ with fixed Kraus representation $\Phi_{\mathcal A}(\rho)=\sum_{i=1}^{r} A_i \rho A_i^{*}$, we write $\Phi_{\mathcal A}^{*}(X)=\sum_{i=1}^{r} A_i^{*} X A_i$ for the adjoint map of $\Phi_{\mathcal A}$, where $X\in B(\mathcal K)$. We will also use the vectorization map $\operatorname{vec}:M_d\to \mathbb{C}^{d}\otimes \mathbb{C}^{d}$, which is defined by
\begin{equation}
\operatorname{vec}\!\left(\sum_{a,b=1}^{d} x_{ab}\,|a\rangle\langle b|\right)
=
\sum_{a,b=1}^{d} x_{ab}\,|b\rangle\otimes |a\rangle.
\end{equation}
The following lemma is standard. 

\begin{lemma}\label{vec identity}\cite{MagnusNeudecker1999}
Let $A$, $X$, and $B$ be matrices of compatible sizes. Then
\begin{equation}
\operatorname{vec}(AXB)=(B^{T}\otimes A)\operatorname{vec}(X).
\end{equation}
\end{lemma}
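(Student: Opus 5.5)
The statement to prove is Lemma~\ref{vec identity}, $\operatorname{vec}(AXB)=(B^{T}\otimes A)\operatorname{vec}(X)$, with the paper's convention $\operatorname{vec}(|a\rangle\langle b|)=|b\rangle\otimes|a\rangle$. I would first reduce to matrix units. Both sides are linear in $X$. It therefore suffices to verify the identity for $X=|a\rangle\langle b|$, where $|a\rangle$ and $|b\rangle$ range over the standard bases of the appropriate spaces. The convention is extended to rectangular matrices in the obvious way: $\operatorname{vec}(|u\rangle\langle v|)=|\bar v\rangle\otimes|u\rangle$ for arbitrary vectors, where $\bar v$ is the entrywise conjugate. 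I would check this by expanding $u$ and $v$ in the standard bases and using linearity together with the defining formula.

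Next I would compute the left side. For $X=|a\rangle\langle b|$ we have $AXB=(A|a\rangle)(\langle b|B)$, which is a rank-one matrix $|u\rangle\langle v|$ with $u=A|a\rangle$ and $\langle v|=\langle b|B$. The latter means $v=B^{*}|b\rangle$, so $\bar v=\overline{B^{*}}|b\rangle=B^{T}|b\rangle$. The rank-one formula then gives
\[
\operatorname{vec}(AXB)=B^{T}|b\rangle\otimes A|a\rangle .
\]

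For the right side, $(B^{T}\otimes A)(|b\rangle\otimes|a\rangle)=B^{T}|b\rangle\otimes A|a\rangle$, which is the same expression. This proves the identity on matrix units, and linearity finishes the proof.

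The only delicate step is the bookkeeping of the ordering and conjugation convention. The paper's $\operatorname{vec}$ places the column index first in the tensor product, so the factor order must come out as $B^{T}\otimes A$ and not $A\otimes B^{T}$. The transpose rather than the adjoint arises because $\operatorname{vec}$ is complex-linear, so no conjugation of the coefficients $x_{ab}$ occurs. I would double-check this with a direct entrywise computation if needed: $(AXB)_{pq}=\sum_{a,b}A_{pa}x_{ab}B_{bq}$, which is the coefficient of $|q\rangle\otimes|p\rangle$ in $\operatorname{vec}(AXB)$, and it matches the $(q,p),(b,a)$ entry $(B^{T})_{qb}A_{pa}$ of $B^{T}\otimes A$ applied to the vector with coordinates $x_{ab}$.
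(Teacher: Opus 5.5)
Your proof is correct. The paper gives no proof of this lemma: it calls the identity standard and cites Magnus--Neudecker, where it is proved for the column-stacking $\operatorname{vec}$.

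Your argument replaces that citation with a self-contained check. You reduce to matrix units and use $\operatorname{vec}(|u\rangle\langle v|)=|\bar v\rangle\otimes|u\rangle$ together with $\overline{B^{*}|b\rangle}=B^{T}|b\rangle$. This also settles something the citation leaves implicit. The paper's definition $|a\rangle\langle b|\mapsto|b\rangle\otimes|a\rangle$ is exactly column stacking, and that is what fixes the factor order $B^{T}\otimes A$. Under row stacking, $|a\rangle\langle b|\mapsto|a\rangle\otimes|b\rangle$, the same computation would give $A\otimes B^{T}$ instead.

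Some extension of $\operatorname{vec}$ to rectangular matrices is required, since the paper defines $\operatorname{vec}$ only on $M_d$ but states the lemma for arbitrary compatible sizes. Your extension is the natural one. In the paper's only application, Lemma~\ref{dim operator system} with $A=A_i^{*}$, $B=A_i$ and $X\in B(\mathcal K)$, both $X$ and $A_i^{*}XA_i$ are square, so the extension is harmless there.
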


\noindent For the canonical Kraus representation $\mathcal R_{\mathcal A}=\{R_a\}_{a=1}^{d_{\mathcal K}}$ of $\Phi_{\mathcal A}^{c}$, define the associated operator system by $S_{\Phi_{\mathcal A}^{c}}:=\operatorname{span}\{R_b^{*}R_a : 1\leq a,b\leq d_{\mathcal K}\}.$ The following Lemma \ref{dim operator system} will be useful in subsequent discussions.

\begin{lemma}\label{dim operator system}
Let $\Phi_{\mathcal A}(\rho)=\sum_{i=1}^{r} A_i \rho A_i^{*}$ be a Kraus representation of a quantum channel $\Phi$. Then we have $S_{\Phi_{\mathcal A}^{c}}=\operatorname{Range}(\Phi_{\mathcal A}^{*}),$ and hence
\begin{equation}
\dim S_{\Phi_{\mathcal A}^{c}}=\dim \operatorname{Range}(\Phi_{\mathcal A}^{*})=\operatorname{rank}\left(\sum_{i=1}^{r} A_i^{T}\otimes A_i^{*}\right).
\end{equation}
\end{lemma}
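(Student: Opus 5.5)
The plan is to compute the products $R_b^{*}R_a$ explicitly and to recognize them as images of matrix units under the adjoint map $\Phi_{\mathcal A}^{*}$. From the canonical Kraus operators $R_a=\sum_{i=1}^{r}|i\rangle\langle a|A_i$ we get $R_b^{*}=\sum_{j} A_j^{*}|b\rangle\langle j|$. Using $\langle j|i\rangle=\delta_{ij}$, this gives
\begin{equation*}
R_b^{*}R_a=\sum_{i=1}^{r} A_i^{*}|b\rangle\langle a|A_i=\Phi_{\mathcal A}^{*}\bigl(|b\rangle\langle a|\bigr).
\end{equation*}
As $a,b$ range over $1,\dots,d_{\mathcal K}$, the matrix units $|b\rangle\langle a|$ form a basis of $B(\mathcal K)$. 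Since $\Phi_{\mathcal A}^{*}$ is linear, the span of their images is exactly $\operatorname{Range}(\Phi_{\mathcal A}^{*})$. This proves $S_{\Phi_{\mathcal A}^{c}}=\operatorname{Range}(\Phi_{\mathcal A}^{*})$, and the first equality of dimensions follows at once.

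For the rank formula, I will transport $\Phi_{\mathcal A}^{*}$ through the vectorization map, which is a linear isomorphism $B(\mathcal K)\cong\mathbb C^{d_{\mathcal K}}\otimes\mathbb C^{d_{\mathcal K}}$ (and similarly on $\mathcal H$). By Lemma~\ref{vec identity} with $A=A_i^{*}$ and $B=A_i$, we have $\operatorname{vec}(A_i^{*}XA_i)=(A_i^{T}\otimes A_i^{*})\operatorname{vec}(X)$. Summing over $i$ gives
\begin{equation*}
\operatorname{vec}\bigl(\Phi_{\mathcal A}^{*}(X)\bigr)=\Bigl(\sum_{i=1}^{r}A_i^{T}\otimes A_i^{*}\Bigr)\operatorname{vec}(X).
\end{equation*}
Thus the matrix representation of $\Phi_{\mathcal A}^{*}$ in the vectorized coordinates is $\sum_i A_i^{T}\otimes A_i^{*}$. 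Because vec is bijective on both sides, the dimension of the range of $\Phi_{\mathcal A}^{*}$ equals the rank of this matrix.

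There is no substantial obstacle here. The only points needing care are bookkeeping ones. First, one must track the index order in $R_b^{*}R_a$ so that the argument of $\Phi_{\mathcal A}^{*}$ comes out as $|b\rangle\langle a|$; any ordering works, since all matrix units occur. Second, the transpose in Lemma~\ref{vec identity} must be taken with respect to the fixed bases of $\mathcal H$ and $\mathcal K$. These are the same bases used to define $R_a$ and vec, so the rank is basis-independent in any case.
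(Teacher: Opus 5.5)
Your proposal is correct and follows essentially the same route as the paper's proof: you compute $R_b^{*}R_a=\Phi_{\mathcal A}^{*}(|b\rangle\langle a|)$ and use that the matrix units span $B(\mathcal K)$, then apply the vectorization identity with $A=A_i^{*}$, $B=A_i$ to identify $\sum_i A_i^{T}\otimes A_i^{*}$ as the matrix of $\Phi_{\mathcal A}^{*}$. Your explicit remark that vec is a linear isomorphism on both $B(\mathcal K)$ and $B(\mathcal H)$ supplies the justification the paper leaves implicit for equating $\dim\operatorname{Range}(\Phi_{\mathcal A}^{*})$ with the rank of that matrix.
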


\begin{proof}
Let $E_{ab}=|a\rangle\langle b|$, where $1\leq a,b\leq d_{\mathcal K}$ and $\{|a\rangle\}_{a=1}^{d_{\mathcal K}}$ is the fixed orthonormal basis of $\mathcal K$ used in the definition of the canonical Kraus representation $R_a=\sum_{i=1}^{r} |i\rangle\langle a| A_i.$\\
\\ 
Then
\begin{equation}
R_b^{*}R_a
=
\left(\sum_{i=1}^{r} A_i^{*}|b\rangle\langle i|\right)
\left(\sum_{j=1}^{r} |j\rangle\langle a| A_j\right)
=
\sum_{i=1}^{r} A_i^{*}|b\rangle\langle a|A_i
=
\Phi_{\mathcal A}^{*}(E_{ba}).
\end{equation}
Since $\{E_{ab}:1\leq a,b\leq d_{\mathcal K}\}$ is a basis of $B(\mathcal K)$, it follows that
\begin{equation}
S_{\Phi_{\mathcal A}^{c}}
=
\operatorname{span}\{R_b^{*}R_a : 1\leq a,b\leq d_{\mathcal K}\}
=
\operatorname{Range}(\Phi_{\mathcal A}^{*}).
\end{equation}
\\
Moreover, Lemma \ref{vec identity} implies that, for any $X \in B(\mathcal K)$,
\begin{equation}
\operatorname{vec}\!\left(\Phi_{\mathcal A}^{*}(X)\right)
=
\sum_{i=1}^{r} \operatorname{vec}(A_i^{*}X A_i)
=
\sum_{i=1}^{r} (A_i^{T}\otimes A_i^{*})\operatorname{vec}(X)
=
\left(\sum_{i=1}^{r} A_i^{T}\otimes A_i^{*}\right)\operatorname{vec}(X).
\end{equation}
Therefore,
\begin{equation}
\dim \operatorname{Range}(\Phi_{\mathcal A}^{*})
=
\operatorname{rank}\left(\sum_{i=1}^{r} A_i^{T}\otimes A_i^{*}\right),
\end{equation}
and the proof is complete.\end{proof}

\noindent By Theorem~\ref{PR iff PSIC}, Theorem~\ref{lower bound HMW}, and Lemma~\ref{dim operator system}, we immediately obtain the following observation: if
\begin{equation}
\operatorname{rank}\left(\sum_{i=1}^{r} A_i^{T}\otimes A_i^{*}\right)\leq N(d),
\end{equation}
then $\Phi$ is not phase retrievable. We now discuss several concrete consequences of this fact and begin with a general obstruction in terms of the ranks of the Kraus operators.

\begin{theorem}\label{general rank}
Let $\Phi : B(\mathcal H)\to B(\mathcal K)$ be a quantum channel and $d=\dim \mathcal H$. Suppose that $\Phi$ admits a Kraus representation $\{A_i\}_{i=1}^r$ with $\operatorname{rank}(A_i)=r_i,\; 1\leq i\leq r$. If $\sum_{i=1}^{r} r_i^{2}\leq N(d)$, then $\Phi$ is not phase retrievable.
\end{theorem}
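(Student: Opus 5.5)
The plan is to reduce the claim to the observation stated just before Theorem~\ref{general rank}. By Theorem~\ref{PR iff PSIC}, Theorem~\ref{lower bound HMW} and Lemma~\ref{dim operator system}, it suffices to show
\begin{equation}
\operatorname{rank}\left(\sum_{i=1}^{r} A_i^{T}\otimes A_i^{*}\right)\leq \sum_{i=1}^{r} r_i^{2}.
\end{equation}
Indeed, if $\Phi$ were phase retrievable, then $\Phi_{\mathcal A}^{c}$ would be PSIC. That makes the self-adjoint part of $S_{\Phi_{\mathcal A}^{c}}$ a PSIC family. Its real dimension equals $\dim_{\mathbb C} S_{\Phi_{\mathcal A}^{c}}$, because $S_{\Phi_{\mathcal A}^{c}}=\operatorname{Range}(\Phi_{\mathcal A}^{*})$ is closed under adjoints, $\Phi_{\mathcal A}^{*}$ being Hermiticity preserving. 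So this dimension would have to exceed $N(d)$.

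The key estimate comes from subadditivity of rank together with multiplicativity of rank under tensor products. First,
\begin{equation}
\operatorname{rank}\Bigl(\sum_i A_i^{T}\otimes A_i^{*}\Bigr)\leq \sum_i \operatorname{rank}(A_i^{T}\otimes A_i^{*}).
\end{equation}
Second,
\begin{equation}
\operatorname{rank}(A_i^{T}\otimes A_i^{*})=\operatorname{rank}(A_i^{T})\operatorname{rank}(A_i^{*})=r_i^2,
\end{equation}
since transpose and adjoint preserve rank. Alternatively, one can argue directly at the operator level: $\operatorname{Range}(\Phi_{\mathcal A}^{*})\subseteq\sum_i\{A_i^{*}XA_i: X\in B(\mathcal K)\}$. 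Each summand has dimension at most $r_i^2$, because $X\mapsto A_i^{*}XA_i$ factors through the compression $P_iXQ_i$, where $P_i$ and $Q_i$ are the projections onto $\operatorname{Range}(A_i)$. That compressed space has dimension $r_i^2$.

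Combining these, $\dim S_{\Phi_{\mathcal A}^{c}}\leq\sum_i r_i^2\leq N(d)$. Hence no PSIC family of self-adjoint operators can be extracted from $S_{\Phi_{\mathcal A}^{c}}$, since $\mathfrak m>N(d)$ is required. So $\Phi_{\mathcal A}^{c}$ is not PSIC, and by Theorem~\ref{PR iff PSIC} the channel $\Phi$ is not phase retrievable.

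The only delicate step is the passage from a complex span of possibly non-self-adjoint operators to the count of self-adjoint operators in Theorem~\ref{lower bound HMW}. I would handle it as follows. PSIC of $\{R_b^*R_a\}$ is equivalent to PSIC of a real basis of the Hermitian part of its span, because $\langle Ex,x\rangle$ for non-Hermitian $E$ is recovered from the values on $\operatorname{Re}E$ and $\operatorname{Im}E$, and these lie in the span by adjoint-closure. That Hermitian part has real dimension $\dim_{\mathbb C}S_{\Phi_{\mathcal A}^{c}}$.
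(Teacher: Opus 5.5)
Your proposal is correct and is essentially the paper's proof. Both reduce, via Theorem~\ref{PR iff PSIC}, Lemma~\ref{dim operator system} and Theorem~\ref{lower bound HMW}, to bounding $\operatorname{rank}\bigl(\sum_{i} A_i^{T}\otimes A_i^{*}\bigr)$, then apply subadditivity of rank and $\operatorname{rank}(A_i^{T}\otimes A_i^{*})=r_i^{2}$; your extra step, that $S_{\Phi_{\mathcal A}^{c}}=\operatorname{Range}(\Phi_{\mathcal A}^{*})$ is adjoint-closed so its Hermitian part has real dimension $\dim_{\mathbb C} S_{\Phi_{\mathcal A}^{c}}$ and is PSIC exactly when $\{R_b^{*}R_a\}$ is, correctly justifies the passage to the self-adjoint count in Theorem~\ref{lower bound HMW}, which the paper leaves implicit.
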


\begin{proof}
By Theorem~\ref{PR iff PSIC}, Theorem~\ref{lower bound HMW}, and Lemma~\ref{dim operator system}, it suffices to show that
\begin{equation}
\operatorname{rank}\left(\sum_{i=1}^{r} A_i^{T}\otimes A_i^{*}\right)\leq N(d).
\end{equation}
Since rank is subadditive, we have
\begin{equation}
\operatorname{rank}\left(\sum_{i=1}^{r} A_i^{T}\otimes A_i^{*}\right)
\leq
\sum_{i=1}^{r} \operatorname{rank}\left(A_i^{T}\otimes A_i^{*}\right).
\end{equation}
Moreover, for each $i$,
\begin{equation}
\operatorname{rank}\left(A_i^{T}\otimes A_i^{*}\right)
=
\operatorname{rank}(A_i^{T})\,\operatorname{rank}(A_i^{*})
=
r_i^{2}.
\end{equation}
Hence
\begin{equation}\label{rank inequlity}
\operatorname{rank}\left(\sum_{i=1}^{r} A_i^{T}\otimes A_i^{*}\right)
\leq
\sum_{i=1}^{r} r_i^{2}.
\end{equation}
Therefore, if $\sum_{i=1}^{r} r_i^{2}\leq N(d)$, then
\begin{equation}
\operatorname{rank}\left(\sum_{i=1}^{r} A_i^{T}\otimes A_i^{*}\right)\leq N(d),
\end{equation}
which implies $\Phi$ is not phase retrievable.
\end{proof}

\begin{remark}\label{rank equility}
From Theorem~\ref{general rank}, it is also natural to ask when the estimate in Equation \ref{rank inequlity} is sharp.\\
\\
Indeed, suppose that $A_i=\sum_{k=1}^{r_i} |u_{i,k}\rangle\langle v_{i,k}|$ is a rank decomposition of each $A_i$, then
\begin{equation}
A_i^{T}\otimes A_i^{*}
=
\sum_{k,\ell=1}^{r_i}
\bigl(|\overline{v_{i,k}}\rangle\otimes |v_{i,\ell}\rangle\bigr)
\bigl(\langle \overline{u_{i,k}}|\otimes \langle u_{i,\ell}|\bigr).
\end{equation}
Hence
\begin{equation}
\sum_{i=1}^{r} A_i^{T}\otimes A_i^{*}
=
\sum_{i=1}^{r}\sum_{k,\ell=1}^{r_i}
\bigl(|\overline{v_{i,k}}\rangle\otimes |v_{i,\ell}\rangle\bigr)
\bigl(\langle \overline{u_{i,k}}|\otimes \langle u_{i,\ell}|\bigr).
\end{equation}
Therefore equality
\begin{equation}\label{rank equation}
\operatorname{rank}\left(\sum_{i=1}^{r} A_i^{T}\otimes A_i^{*}\right)
=
\sum_{i=1}^{r} r_i^{2}
\end{equation}
holds whenever the families $\{\,|v_{i,k}\rangle\langle v_{i,\ell}| : 1\leq i\leq r,\ 1\leq k,\ell\leq r_i\,\}$ and $\{\,|u_{i,k}\rangle\langle u_{i,\ell}| : 1\leq i\leq r,\ 1\leq k,\ell\leq r_i\,\}$ are linearly independent.
\end{remark}

\subsection{Important channel classes}

\noindent We next turn to entanglement-breaking channels, whose rank-one Kraus structure makes them a natural class in which phase retrievability tends to fail. 

\begin{definition}
A quantum channel $\Phi:B(\mathcal H)\to B(\mathcal K)$ is called \emph{entanglement breaking} if it admits a Kraus representation $\Phi_{\mathcal A}(\rho)=\sum_{i=1}^{r} A_i \rho A_i^{*}$ such that $\operatorname{rank}(A_i)=1,\; 1\leq i\leq r.$ The smallest such integer $r$ is called the \emph{entanglement-breaking rank} of $\Phi$, and is denoted by $\operatorname{ebr}(\Phi)$.
\end{definition}

\noindent For these channels, the previous Theorem \ref{general rank} becomes particularly simple.

\begin{corollary}\label{EB not PR}

Let $\Phi:B(\mathcal H)\to B(\mathcal K)$ be an entanglement-breaking channel, and let $\operatorname{ebr}(\Phi)$ denote the entanglement-breaking rank of $\Phi$. If $\operatorname{ebr}(\Phi)\leq N(d)$, then $\Phi$ is not phase retrievable. Moreover, if $\Phi_{\mathcal A}(\rho)=\sum_{i=1}^{\operatorname{ebr}(\Phi)} A_i \rho A_i^{*}$ is the rank-one Kraus representation of $\Phi$, where $A_i=|u_i\rangle\langle v_i|$ for $1\leq i\leq \operatorname{ebr}(\Phi)$, then the equality in Equation \ref{rank equation} holds whenever the families $\{\,|u_i\rangle\langle u_i| : 1\leq i\leq \operatorname{ebr}(\Phi)\,\}$ and $\{\,|v_i\rangle\langle v_i| : 1\leq i\leq \operatorname{ebr}(\Phi)\,\}$ are both linearly independent.
\end{corollary}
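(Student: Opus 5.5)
The first claim will follow from Theorem~\ref{general rank}. Choose a Kraus representation $\Phi_{\mathcal A}(\rho)=\sum_{i=1}^{\operatorname{ebr}(\Phi)}A_i\rho A_i^{*}$ in which every $A_i$ has rank one; it exists by the definition of $\operatorname{ebr}(\Phi)$. Then $r_i=1$ for all $i$, so $\sum_i r_i^2=\operatorname{ebr}(\Phi)$. The hypothesis $\operatorname{ebr}(\Phi)\le N(d)$ is exactly the hypothesis of Theorem~\ref{general rank}, which gives that $\Phi$ is not phase retrievable.

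For the equality statement I will specialize Remark~\ref{rank equility} to $r_i=1$. Write $A_i=|u_i\rangle\langle v_i|$, so each rank decomposition has one term. Then $A_i^{T}\otimes A_i^{*}=(|\overline{v_i}\rangle\otimes|v_i\rangle)(\langle\overline{u_i}|\otimes\langle u_i|)$, and
\begin{equation}
\sum_{i} A_i^{T}\otimes A_i^{*}=\sum_i |\operatorname{vec}(|v_i\rangle\langle v_i|)\rangle\,\langle \operatorname{vec}(|u_i\rangle\langle u_i|)|,
\end{equation}
up to the fixed (bijective, possibly conjugate-linear) identification of $|\overline{w}\rangle\otimes|w\rangle$ with the vectorized projection $|w\rangle\langle w|$ given by the definition of $\operatorname{vec}$. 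The one point to check carefully is that this identification is consistent on the $u$ side, where the bra $\langle\overline{u_i}|\otimes\langle u_i|$ is the adjoint of $|u_i\rangle\otimes|\overline{u_i}\rangle$. The check only needs that this vector is, up to a fixed flip of tensor factors, the vectorization of $|u_i\rangle\langle u_i|$ or of its transpose. Linear independence is preserved under that flip, under transposition, and under complex conjugation, so the families stated in the corollary are the right ones.

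This sum is a sum of $\operatorname{ebr}(\Phi)$ rank-one operators $\sum_i |p_i\rangle\langle q_i|$. I will then use the standard fact that such a sum has rank equal to the number of terms whenever both $\{p_i\}$ and $\{q_i\}$ are linearly independent. Indeed, the operator factors as $PQ^{*}$, where $P$ has columns $p_i$ and $Q$ has columns $q_i$. $P$ is injective and $Q^{*}$ is surjective, so $\operatorname{rank}(PQ^{*})=\operatorname{ebr}(\Phi)=\sum_i r_i^2$. This is the equality in Equation~\ref{rank equation}, holding whenever $\{|v_i\rangle\langle v_i|\}$ and $\{|u_i\rangle\langle u_i|\}$ are both linearly independent.
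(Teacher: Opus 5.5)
Your proof is correct and follows the paper's route: the first claim is Theorem~\ref{general rank} with all $r_i=1$, and the equality is the rank-one specialization of Remark~\ref{rank equility}, which you complete with the $PQ^{*}$ factorization argument. One small correction, which your hedge already absorbs: the bra $\langle\overline{u_i}|\otimes\langle u_i|$ is the adjoint of $|\overline{u_i}\rangle\otimes|u_i\rangle$, not of $|u_i\rangle\otimes|\overline{u_i}\rangle$, and $|\overline{u_i}\rangle\otimes|u_i\rangle=\operatorname{vec}(|u_i\rangle\langle u_i|)$ exactly under the paper's definition of $\operatorname{vec}$, so no flip or conjugation is needed.
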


\noindent The proof of Corollary \ref{EB not PR} follows directly from Theorem~\ref{general rank} and the discussion in Remark \ref{rank equility}.\\
\\
\noindent We next consider twirling channels, an important class of quantum channels arising from unitary representations of compact groups. Their symmetry allows several associated quantities to be computed explicitly in terms of the irreducible multiplicities and dimensions, making them a natural setting for applying the general phase retrievability criterion above.\\
\\
In our previous work on twirling channels \cite{HanLiu2024}, phase retrievability was studied from the subspace point of view. More precisely, the phase-retrievability index $\operatorname{pr}(\Phi)$ is the largest dimension of a subspace on which $\Phi$ is phase retrievable. Here we instead consider phase retrievability on the whole input space.\\
\\
Let $G$ be a compact group, $\mu$ be the normalized Haar measure on $G$, and $\pi:G\to U(\mathcal H)$ be a finite-dimensional unitary representation. The twirling channel associated with $\pi$ is defined by
\begin{equation}
\Phi_{\pi}(\rho)=\int_G \pi(g)\rho \pi(g^{-1})\,d\mu(g).
\end{equation}
If $G$ is finite, then
\begin{equation}
\Phi_{\pi}(\rho)=\frac{1}{|G|}\sum_{g\in G}\pi(g)\rho \pi(g^{-1}).
\end{equation}

\noindent Suppose that $\pi$ can be decomposed as
\begin{equation}
U\pi(g)U^*=\bigoplus_{\alpha=1}^{s}\bigl(I_{m_\alpha}\otimes \pi_\alpha(g)\bigr),
\qquad g\in G,
\end{equation}
where $\pi_1,\dots,\pi_s$ are inequivalent irreducible representations, $m_\alpha$ is the multiplicity of $\pi_\alpha$, and $n_\alpha=\dim \mathcal H_\alpha$ for each $\alpha$. Then $d=\dim \mathcal H=\sum_{\alpha=1}^{s} m_\alpha n_\alpha.$\\
\\
We shall still use $\beta(\Phi_\pi)$ to denote the largest dimension of a quantum code of $\Phi_\pi$. It was also shown in \cite{HanLiu2024} that
\begin{equation}
\operatorname{Range}(\Phi_\pi)=A_\pi',
\end{equation}
where $A_\pi$ is the $C^*$-algebra generated by $\pi(G)$ and $A_\pi'$ is its commutant. By Schur's lemma and the above decomposition of $\pi$, one has
\begin{equation}
A_\pi' \cong \bigoplus_{\alpha=1}^{s}\bigl(M_{m_\alpha}(\mathbb C)\otimes I_{\mathcal H_\alpha}\bigr),
\end{equation}
and therefore
\begin{equation}
\dim A_\pi'=\sum_{\alpha=1}^{s} m_\alpha^2.
\end{equation}

\noindent The next theorem identifies the dimension of the operator system generated by the complementary channel of $\Phi_\pi$.

\begin{theorem}\label{twirling operator system}
Let $\Phi_\pi$ be the twirling channel associated with $\pi$. Then
\begin{equation}
\dim S_{\Phi_\pi^{c}}=\sum_{\alpha=1}^{s} m_\alpha^{2}.
\end{equation}
\end{theorem}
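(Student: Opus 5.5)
The plan is to reduce the statement to Lemma~\ref{dim operator system}, which gives $S_{\Phi_\pi^{c}}=\operatorname{Range}(\Phi_\pi^{*})$ for any fixed Kraus representation, and then to identify $\operatorname{Range}(\Phi_\pi^{*})$ with the commutant $A_\pi'$. First I fix a Kraus representation of $\Phi_\pi$. For finite $G$ I take $A_g=|G|^{-1/2}\pi(g)$. For compact $G$ I use the fact that $\Phi_\pi$ is a finite-dimensional CP map, so it has some finite Kraus family $\{A_i\}$. Lemma~\ref{dim operator system} only needs a Kraus family, and the range of $\Phi_\pi^{*}$ does not depend on which family is chosen, since $\Phi_\pi^{*}$ is determined by $\Phi_\pi$.

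Next I compute the adjoint directly. Using the Hilbert--Schmidt pairing and invariance of Haar measure, together with $\pi(g^{-1})=\pi(g)^{*}$,
\begin{equation*}
\langle \Phi_\pi(X),Y\rangle=\int_G \operatorname{Tr}\bigl(Y^{*}\pi(g)X\pi(g)^{*}\bigr)\,d\mu(g)=\Bigl\langle X,\int_G \pi(g)^{*}Y\pi(g)\,d\mu(g)\Bigr\rangle .
\end{equation*}
This gives
\[
\Phi_\pi^{*}(Y)=\int_G \pi(g^{-1})Y\pi(g)\,d\mu(g).
\]
The substitution $g\mapsto g^{-1}$ preserves the Haar measure of a compact group, so $\Phi_\pi^{*}=\Phi_\pi$. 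Equivalently, $\Phi_\pi$ is self-adjoint with respect to the Hilbert--Schmidt inner product.

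Combining these steps with the result quoted from \cite{HanLiu2024}, namely $\operatorname{Range}(\Phi_\pi)=A_\pi'$, gives
\[
S_{\Phi_\pi^{c}}=\operatorname{Range}(\Phi_\pi^{*})=\operatorname{Range}(\Phi_\pi)=A_\pi'.
\]
The dimension count $\dim A_\pi'=\sum_{\alpha}m_\alpha^{2}$ then follows from Schur's lemma and the block decomposition of $U\pi(\cdot)U^{*}$ recalled before the statement, which finishes the proof.

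The step most likely to need care is justifying $\operatorname{Range}(\Phi_\pi)=A_\pi'$ if one does not simply cite it. One inclusion is immediate: $\Phi_\pi(\rho)$ commutes with every $\pi(h)$ by invariance of Haar measure. The reverse inclusion holds because $\Phi_\pi(T)=T$ for every $T\in A_\pi'$, so $\Phi_\pi$ is the idempotent projection onto $A_\pi'$. A second point to handle is the compact (non-finite) case of the Kraus representation. There I would argue that $S_{\Phi_\pi^c}$, defined through any finite Kraus family via Lemma~\ref{dim operator system}, depends only on $\Phi_\pi^{*}$, so the particular choice of family does not matter.
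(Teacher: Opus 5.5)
Your proposal is correct and follows essentially the same route as the paper: Lemma~\ref{dim operator system} reduces the statement to $\operatorname{Range}(\Phi_\pi^{*})$, inversion invariance of Haar measure gives $\Phi_\pi^{*}=\Phi_\pi$, and the identity $\operatorname{Range}(\Phi_\pi)=A_\pi'$ together with Schur's lemma gives $\sum_{\alpha} m_\alpha^{2}$. Your additional remarks fill in details the paper leaves implicit. These are the independence of $\operatorname{Range}(\Phi_\pi^{*})$ from the chosen Kraus family, the existence of a finite Kraus family in the compact case, and the direct check that $\Phi_\pi$ is an idempotent onto $A_\pi'$.
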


\begin{proof}
By Lemma~\ref{dim operator system}, we first have that
\begin{equation}
\dim S_{\Phi_\pi^{c}}=\dim \operatorname{Range}(\Phi_\pi^{*}).
\end{equation}
We further claim that $\Phi_\pi$ is self-adjoint with respect to the Hilbert--Schmidt inner product. Indeed, for any $X,Y\in B(\mathcal H)$,
\begin{equation}
\langle X,\Phi_\pi(Y)\rangle
=
\int_G \operatorname{Tr}\bigl(X^*\pi(g)Y\pi(g^{-1})\bigr)\,d\mu(g)
=
\int_G \operatorname{Tr}\bigl((\pi(g^{-1})X\pi(g))^*Y\bigr)\,d\mu(g)
=
\langle \Phi_\pi(X),Y\rangle,
\end{equation}
which shows that $\operatorname{Range}(\Phi_\pi^{*})=\operatorname{Range}(\Phi_\pi)$.\\
\\
Using the fact that $\operatorname{Range}(\Phi_\pi)=A_\pi'$, we obtain
\begin{equation}
\dim S_{\Phi_\pi^{c}}
=
\dim \operatorname{Range}(\Phi_\pi^{*})
=
\dim \operatorname{Range}(\Phi_\pi)
=
\dim A_\pi'
=
\sum_{\alpha=1}^{s} m_\alpha^{2}.
\end{equation}
This proves the result.
\end{proof}

\noindent Combining Theorem~\ref{twirling operator system} with Theorem~\ref{PR iff PSIC} and Theorem~\ref{lower bound HMW}, we have the following necessary condition immediately.

\begin{corollary}\label{twirling not PR}
Let $\Phi_\pi$ be the twirling channel associated with group representation $\pi$. If $\sum_{\alpha=1}^{s} m_\alpha^{2}\leq N(d)$, then $\Phi_\pi$ is not phase retrievable.
\end{corollary}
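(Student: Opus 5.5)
The argument combines Theorem~\ref{PR iff PSIC}, Lemma~\ref{dim operator system}, Theorem~\ref{lower bound HMW} and Theorem~\ref{twirling operator system}, following the observation stated right after Lemma~\ref{dim operator system}. We argue by contrapositive: assume $\Phi_\pi$ is phase retrievable and show that $\sum_{\alpha=1}^{s} m_\alpha^{2}>N(d)$.

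First fix a Kraus representation $\mathcal A$ of $\Phi_\pi$. When $G$ is finite we take $A_g=|G|^{-1/2}\pi(g)$. In general the range of $\Phi_\pi$ is finite dimensional, so the integral can be replaced by a finite convex combination of conjugations $\pi(g_k)\rho\pi(g_k)^*$ defining the same map, giving a finite Kraus family. Lemma~\ref{dim operator system} identifies $S_{\Phi_{\pi,\mathcal A}^{c}}$ with $\operatorname{Range}(\Phi_\pi^{*})$, and this range depends only on the map, not on the chosen Kraus family. So Theorem~\ref{twirling operator system} gives
\begin{equation*}
\dim S_{\Phi_\pi^{c}}=\sum_{\alpha=1}^{s} m_\alpha^2 .
\end{equation*}

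By Theorem~\ref{PR iff PSIC}, phase retrievability of $\Phi_\pi$ means the family $\{R_b^{*}R_a\}$ is PSIC. Since $\langle R_b^*R_a x,x\rangle$ and $\langle R_a^*R_b x,x\rangle$ are complex conjugates of each other, PSIC of the family is equivalent to PSIC of the self-adjoint operators
\begin{equation*}
R_b^*R_a+R_a^*R_b,\qquad i\bigl(R_b^*R_a-R_a^*R_b\bigr).
\end{equation*}
These span the real space $(S_{\Phi_\pi^c})_{sa}$ of self-adjoint elements of $S_{\Phi_\pi^c}$. The system is self-adjoint because $\Phi_\pi^*$ is Hermiticity preserving, so its complex dimension equals the real dimension of its self-adjoint part, namely $\sum_\alpha m_\alpha^2$.

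Any real basis of $(S_{\Phi_\pi^c})_{sa}$ determines the same expectation data, so it is a PSIC set of $\sum_\alpha m_\alpha^2$ self-adjoint operators on $\mathbb C^d$. Theorem~\ref{lower bound HMW} then forces $\sum_\alpha m_\alpha^2>N(d)$, which is the contrapositive of the claim.

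The main obstacle is this real-versus-complex dimension bookkeeping. Theorem~\ref{lower bound HMW} counts self-adjoint operators, while $S$ is a complex span, so we must check that the number of real self-adjoint generators is exactly the complex dimension $\sum_\alpha m_\alpha^2$. Self-adjointness of the system is what guarantees this.
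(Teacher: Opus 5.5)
Your proposal is correct and follows the paper's route: the paper obtains the corollary immediately by combining Theorem~\ref{twirling operator system} (via Lemma~\ref{dim operator system}) with Theorem~\ref{PR iff PSIC} and Theorem~\ref{lower bound HMW}. Your extra steps are valid and only make explicit what the paper leaves implicit, namely producing a finite Kraus family for compact $G$ and checking that the complex dimension of the self-adjoint system $S_{\Phi_\pi^{c}}$ equals the number of real self-adjoint generators counted in Theorem~\ref{lower bound HMW}.
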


\noindent The present result and the twirling channel results in \cite{HanLiu2024} address the phase-retrievability problem from two complementary sides. In the previous paper, the main question was how large a subspace can remain phase retrievable for a given twirling channel $\Phi_\pi$. In the present paper, by contrast, the focus is on when the symmetry of $\Phi_\pi$ permits phase retrievability on the whole space $\mathcal H$.\\
\\
Thus the former concerns the size of phase-retrievable subspaces, while the latter gives a necessary condition for full-space phase retrievability. In particular, it was shown in \cite{HanLiu2024} that $\beta(\Phi_\pi)=\max_{\alpha} m_\alpha$ and $\operatorname{pr}(\Phi_\pi)\ge \max\left\{\beta(\Phi_\pi),\left\lfloor \frac d4\right\rfloor+1\right\},$ whereas Corollary~\ref{twirling not PR} shows that if $\sum_{\alpha=1}^{s} m_\alpha^2\le N(d),$ then $\Phi_\pi$ cannot be phase retrievable on the whole space $\mathcal H$.

\section{Quantum Interferometer and Frame Coupling}

\subsection{Quantum interference and operator-valued frame coupling}

We now turn from the structural results of Section~3 to the interferometric mechanism that can generate new directions in the associated operator system. Since the basic physical setup has already been illustrated in Figure~\ref{fig:interferometer}, we only present the ingredients needed for this section; see \cite{Oi03, OiAberg2006, Aberg04,BranciardEtAl2021,VanrietveldeChiribella2021} for related discussions of interference and coherent control of quantum channels.\\
\\
Let $\Phi_A(\rho)=\sum_{i=1}^{m}A_i\rho A_i^*$, and $\Phi_B(\rho)=\sum_{i=1}^{m}B_i\rho B_i^*,
\; \rho\in B(\mathcal H),$ be two quantum channels from $B(\mathcal H)$ to $B(\mathcal K)$. If the two sets of Kraus operators have different lengths, we pad the shorter one with zero operators so that both are indexed by $\{1,\dots,m\}$.\\
\\
Choose coherent Stinespring isometries $V_A$ and $V_B$ that are defined by
\begin{equation}
V_Ax=\sum_{i=1}^{m}A_ix\otimes |i\rangle,
\qquad
V_Bx=\sum_{i=1}^{m}B_ix\otimes |i\rangle,
\qquad x\in\mathcal H,
\end{equation}
with respect to a common environment basis $\{|i\rangle\}_{i=1}^{m}$. Let the path degree of freedom be recorded by $\{|0\rangle,|1\rangle\}$, and suppose the first beam splitter sends the input into an equal superposition of the two paths.
\begin{equation}
x\longmapsto \frac{1}{\sqrt2}\bigl(|0\rangle+|1\rangle\bigr)\otimes x.
\end{equation}

\noindent After the lower arm applies $V_A$, the upper arm applies $V_B$, and a phase $e^{i\theta}$ is inserted
in the upper arm, the state before the second beam splitter is now given by

\begin{equation}
\frac{1}{\sqrt2}
\Bigl(
|0\rangle\otimes V_Ax
+
e^{i\theta}|1\rangle\otimes V_Bx
\Bigr).
\end{equation}
Applying the second beam splitter, one obtains
\begin{equation}
\frac{1}{2}
\Bigl[
|0\rangle\otimes (V_A+e^{i\theta}V_B)x
+
|1\rangle\otimes (V_A-e^{i\theta}V_B)x
\Bigr].
\end{equation}
Hence the amplitude at port $0$ is governed by
\begin{equation}
\frac{1}{2}\bigl(V_A+e^{i\theta}V_B\bigr)
=
\sum_{i=1}^{m}\frac{1}{2}\bigl(A_i+e^{i\theta}B_i\bigr)\otimes |i\rangle.
\end{equation}
Therefore the corresponding port $0$ completely positive map $\Psi_\theta$ is given by
\begin{equation}\label{port cp map}
\Psi_\theta(\rho)
=
\sum_{i=1}^{m}\frac{1}{4}M_i(\theta)\rho M_i(\theta)^*,
\qquad
M_i(\theta):=A_i+e^{i\theta}B_i.
\end{equation}
The physical Kraus operators of the port map are $\frac{1}{2}M_i(\theta)$. Since multiplication by the nonzero scalar $\frac{1}{2}$ does not affect pure-state injectivity or the associated operator-system rank conditions, we call $M_i(\theta)$ the unnormalized port operators. These operators suggest a coordinatewise way to combine two operator-valued frames. We now introduce the following definition.
\begin{definition}\label{def:ovf-coupling}
Let $\mathcal A=\{A_i\}_{i=1}^{m}$ and $\mathcal B=\{B_i\}_{i=1}^{m}$ be operator-valued frames in $B(\mathcal H,\mathcal K)$, and $c=(c_1,\dots,c_m)\in \mathbb C^m$. The \emph{frame coupling} between $\mathcal{A}$ and $\mathcal{B}$ with coefficient vector $c$ is the set $\mathcal A\oplus_c\mathcal B :=\{A_i+c_iB_i\}_{i=1}^{m}.$\\
\\
If $c_i=\lambda$ for all $i$, we call $\mathcal A\oplus_c\mathcal B$ a
\emph{uniform coupling}, and if moreover $|\lambda|=1$, we call it a
\emph{uniform coherent coupling}. In particular, when $\lambda=e^{i\theta}$, the family $\mathcal M_\theta(\mathcal A,\mathcal B) := \{A_i+e^{i\theta}B_i\}_{i=1}^{m}$ is called the \emph{interferometric coupling} of $\mathcal A$ and $\mathcal B$.
\end{definition}

\begin{remark}
Similar types of combinations also appear in frame theory, for example in weighted $g$-frames, and multiplier-type constructions \cite{Sun2006,BalazsAntoineGrybos2010,StoevaBalazs2020}. In the present setting, the scalar unimodular case is especially relevant, since it is the one naturally realized by a two-path interferometer.
\end{remark}

\begin{remark}\label{remark:implementation}
For fixed channels $\Phi_A$ and $\Phi_B$, the family $\{M_i(\theta)\}_{i=1}^{m}$ depends on the chosen Kraus realizations of the two arms. Hence the interferometric coupling is sensitive to the physical realization of the channels, not merely to their input-output action \cite{Aberg04,BranciardEtAl2021,VanrietveldeChiribella2021}.
\end{remark}

\subsection{Necessary Conditions for Phase Retrievability of the Port Map}

Let $\{A_i\}_{i=1}^{m}$ and $\{B_i\}_{i=1}^{m}$ be the sets of Kraus operators for the quantum channels in the lower and upper arms of a quantum interferometer, respectively. Both sets define Parseval operator-valued frames, that is,
\begin{equation}
\sum_{i=1}^{m}A_i^*A_i=I_{\mathcal H},
\qquad
\sum_{i=1}^{m}B_i^*B_i=I_{\mathcal H}.
\end{equation}
Let $T:=\sum_{i=1}^{m}A_i^*B_i,$ we have the following criterion for the port map.

\begin{theorem}\label{thm:port-frame-criterion}
Let $\{M_i(\theta)\}_{i=1}^{m}$ denote the unnormalized port operator set that arising from the quantum interferometer, given by $M_i(\theta)=A_i+e^{i\theta}B_i, \; i=1,\dots,m.$ Further define $E_\theta:=\sum_{i=1}^{m}M_i(\theta)^*M_i(\theta).$ Then, we have
\begin{equation}\label{E theta}
E_\theta = 2I_{\mathcal H}+e^{i\theta}T+e^{-i\theta}T^*.
\end{equation}
Moreover, the following are equivalent:
\begin{enumerate}
    \item[(i)] $\{M_i(\theta)\}_{i=1}^{m}$ is an operator-valued frame;
    \item[(ii)] $E_\theta$ is invertible;
    \item[(iii)] $-e^{-i\theta}\notin \sigma(T)$.
\end{enumerate}
\end{theorem}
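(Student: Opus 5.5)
The identity \eqref{E theta} comes from expanding each product $M_i(\theta)^*M_i(\theta)=(A_i^*+e^{-i\theta}B_i^*)(A_i+e^{i\theta}B_i)$ and summing over $i$. The Parseval relations $\sum_i A_i^*A_i=\sum_i B_i^*B_i=I_{\mathcal H}$ give the term $2I_{\mathcal H}$. The cross terms give $e^{i\theta}\sum_i A_i^*B_i=e^{i\theta}T$ and $e^{-i\theta}\sum_i B_i^*A_i=e^{-i\theta}T^*$.

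For (i)$\iff$(ii) I would use the frame inequalities from Section 2. Since $\mathcal H$ is finite-dimensional and $E_\theta$ is the frame operator of $\{M_i(\theta)\}$, the upper bound is automatic. The lower bound $\alpha I\le E_\theta$ with $\alpha>0$ holds exactly when the positive semidefinite operator $E_\theta$ has trivial kernel, that is, exactly when it is invertible.

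The substantive step is (ii)$\iff$(iii). I would pass to the coherent Stinespring isometries $V_A,V_B:\mathcal H\to\mathcal K\otimes\mathbb C^m$ of Section 4.1. With them,
\[
T=V_A^*V_B \quad\text{and}\quad E_\theta=(V_A+e^{i\theta}V_B)^*(V_A+e^{i\theta}V_B).
\]
Hence $E_\theta$ is singular if and only if there is a unit vector $x$ with $V_Ax=-e^{i\theta}V_Bx$.

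Suppose such an $x$ exists. Applying $V_A^*$ and using $V_A^*V_A=I$ gives $x=-e^{i\theta}Tx$, so $Tx=-e^{-i\theta}x$.

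Conversely, since $\dim\mathcal H<\infty$, the spectrum of $T$ consists of eigenvalues. So suppose $Tx=-e^{-i\theta}x$ for a unit vector $x$. Then
\[
\langle V_Bx,V_Ax\rangle=\langle Tx,x\rangle=-e^{-i\theta},
\]
and this has modulus one, while $V_Ax$ and $V_Bx$ are both unit vectors because $V_A,V_B$ are isometries. This is the equality case of Cauchy--Schwarz, so $V_Bx=\lambda V_Ax$ with $\lambda=\langle V_Bx,V_Ax\rangle=-e^{-i\theta}$. It follows that $(V_A+e^{i\theta}V_B)x=V_Ax+e^{i\theta}\lambda V_Ax=0$, so $E_\theta x=0$.

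The main point to get right is this converse direction. The eigenvalue condition alone does not obviously force a kernel vector; one needs the observation that $T$ is a contraction, so a unimodular eigenvalue forces Cauchy--Schwarz equality. The phase bookkeeping in $\langle Tx,x\rangle$ versus $\langle V_Bx,V_Ax\rangle$ must also be checked carefully.
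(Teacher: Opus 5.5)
Your proof is correct. The expansion of $E_\theta$ and the equivalence (i)$\iff$(ii) match the paper, but for (ii)$\iff$(iii) you take a different route.

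The paper stays on $\mathcal H$. It first proves $\|T\|\le 1$ by Cauchy--Schwarz over the Kraus sums, and then works with the quadratic form $\langle E_\theta x,x\rangle=2+2\operatorname{Re}\bigl(e^{i\theta}\langle Tx,x\rangle\bigr)$. In that setup the step from an eigenvector of $T$ to singularity of $E_\theta$ is immediate: the form vanishes and $E_\theta\ge 0$. The substantive direction is the other one. The contraction bound turns $\operatorname{Re}\bigl(e^{i\theta}\langle Tx,x\rangle\bigr)=-1$ into $\langle Tx,x\rangle=-e^{-i\theta}$, and then equality in $|\langle Tx,x\rangle|\le\|Tx\|\le 1$ forces $Tx=-e^{-i\theta}x$.

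Your factorization $E_\theta=W^*W$, with $W=V_A+e^{i\theta}V_B$, reverses which direction is easy. Singularity gives an eigenvector in one line by applying $V_A^*$, with no norm estimate. Your closing remark overstates what the converse needs, though: it requires neither the contraction property nor an equality case of Cauchy--Schwarz. For a unit vector $x$,
\[
\|Wx\|^2=\|V_Ax\|^2+\|V_Bx\|^2+2\operatorname{Re}\bigl(e^{i\theta}\langle Tx,x\rangle\bigr)=2+2\operatorname{Re}\bigl(e^{i\theta}\langle Tx,x\rangle\bigr),
\]
which vanishes as soon as $Tx=-e^{-i\theta}x$. Your Cauchy--Schwarz argument is valid, just longer than needed.

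So your route gives a purely algebraic proof that uses only $V_A^*V_A=V_B^*V_B=I$. It also has a physical reading: $E_\theta$ is singular exactly when the two arm amplitudes cancel at port $0$ for some input. The paper's route avoids the dilation entirely and records the bound $\|T\|\le 1$ along the way.
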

\begin{proof}
Since both $\{A_i\}_{i=1}^{m}$ and $\{B_i\}_{i=1}^{m}$ are Parseval frames,
\begin{align}
E_\theta
&=
\sum_{i=1}^{m}(A_i+e^{i\theta}B_i)^*(A_i+e^{i\theta}B_i) \\
&=
\sum_{i=1}^{m}A_i^*A_i
+
e^{i\theta}\sum_{i=1}^{m}A_i^*B_i
+
e^{-i\theta}\sum_{i=1}^{m}B_i^*A_i
+
\sum_{i=1}^{m}B_i^*B_i \\
&=
2I_{\mathcal H}+e^{i\theta}T+e^{-i\theta}T^*,
\end{align}
which proves \eqref{E theta}.\\
\\
Now for every $x\in \mathcal H$,
\begin{equation}
\sum_{i=1}^{m}\|M_i(\theta)x\|^2=\sum_{i=1}^{m}\langle M_i(\theta)x, M_i(\theta)x\rangle=\langle E_\theta x,x\rangle.
\end{equation}
Since $\mathcal H$ is finite-dimensional, the set $\{M_i(\theta)\}_{i=1}^{m}$ is an
operator-valued frame if and only if $E_\theta$ is positive definite, and then invertible. Thus (i) and (ii) are equivalent.\\
\\
Next we show the equivalence between (ii) and (iii). We first verify that the operator $T$ is a contraction. For $x,y\in\mathcal H$,
\begin{equation}
\langle Tx,y\rangle
=
\sum_{i=1}^{m}\langle A_i^*B_i x,y\rangle
=
\sum_{i=1}^{m}\langle B_i x,A_i y\rangle.
\end{equation}
By Cauchy--Schwarz and the Parseval identities, we have
\begin{equation}
|\langle Tx,y\rangle|
\le
\Bigl(\sum_{i=1}^{m}\|B_i x\|^2\Bigr)^{1/2}
\Bigl(\sum_{i=1}^{m}\|A_i y\|^2\Bigr)^{1/2}
=
\|x\|\,\|y\|.
\end{equation}
Therefore $\|T\|\le 1$.\\
\\
Now let $\|x\|=1$. Then
\begin{equation}
\langle E_\theta x,x\rangle
=
2+e^{i\theta}\langle Tx,x\rangle+e^{-i\theta}\overline{\langle Tx,x\rangle}
=
2+2\operatorname{Re}\bigl(e^{i\theta}\langle Tx,x\rangle\bigr).
\end{equation}
Since $\|T\|\le 1$, we have $|\langle Tx,x\rangle|\le 1$, and hence $\langle E_\theta x,x\rangle\ge 0.$ Then we have that $E_\theta$ is singular if and only if there exists a unit vector $x$ such that $\langle E_\theta x,x\rangle=0.$\\
\\
By the formula above, this is equivalent to $\operatorname{Re}\bigl(e^{i\theta}\langle Tx,x\rangle\bigr)=-1.$
Since $|e^{i\theta}\langle Tx,x\rangle|\le 1$, this happens if and only if $e^{i\theta}\langle Tx,x\rangle=-1,$
that is, $\langle Tx,x\rangle=-e^{-i\theta}.$
But then
\begin{equation}
1=|\langle Tx,x\rangle|\le \|Tx\|\le 1,
\end{equation}
which forces $Tx$ to be a scalar multiple of $x$.
Hence $Tx=-e^{-i\theta}x.$ Therefore $E_\theta$ is singular if and only if $-e^{-i\theta}\in \sigma(T)$. That completes the proof.
\end{proof}

\noindent Theorem~\ref{thm:port-frame-criterion} further implies that, for the port map to be phase retrievable, its Kraus set must at least form an operator-valued frame.

\begin{proposition}\label{singular, not PR}
For $\dim\mathcal{H}\geq 2$, let $\{M_i(\theta)\}_{i=1}^{m}$ be the unnormalized Kraus operators of the port map from a quantum interferometer, with $M_i(\theta)=A_i+e^{i\theta}B_i, \; i=1,\dots,m,$ and $E_\theta:=\sum_{i=1}^{m}M_i(\theta)^*M_i(\theta).$ If $E_\theta$ is singular, then $\Psi_\theta$ is not pure-state injective.
\end{proposition}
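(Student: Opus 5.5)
If $E_\theta$ is singular, I will pick a nonzero vector $x_0\in\ker E_\theta$ and use it to build two distinct pure states with the same image under $\Psi_\theta$.

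First, since $E_\theta\ge 0$ and $\langle E_\theta x_0,x_0\rangle=\sum_{i=1}^m\|M_i(\theta)x_0\|^2=0$, we get $M_i(\theta)x_0=0$ for every $i$. Normalize so that $\|x_0\|=1$. Because $\dim\mathcal H\ge 2$, there is a unit vector $y\perp x_0$.

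I would then take the two unit vectors
\[
u=\tfrac{1}{\sqrt2}(x_0+y), \qquad v=\tfrac{1}{\sqrt2}(-x_0+y).
\]
Since $M_i(\theta)x_0=0$, we have $M_i(\theta)u=M_i(\theta)v=\tfrac{1}{\sqrt2}M_i(\theta)y$ for each $i$. Hence
\[
\Psi_\theta(\rho_u)=\tfrac14\sum_{i=1}^m M_i(\theta)\rho_uM_i(\theta)^*=\Psi_\theta(\rho_v).
\]

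It remains to check $\rho_u\ne\rho_v$, that is, $u\ne e^{i\phi}v$ for every $\phi$. Suppose $u=e^{i\phi}v$. Taking the inner product of both sides with $y$ gives $1/\sqrt2=e^{i\phi}/\sqrt2$, so $e^{i\phi}=1$. Taking the inner product with $x_0$ gives $1/\sqrt2=-e^{i\phi}/\sqrt2$, so $e^{i\phi}=-1$, a contradiction. Therefore $\rho_u\ne\rho_v$, and $\Psi_\theta$ is not pure-state injective.

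The only delicate point is the choice of the pair $u,v$: a more naive choice such as $x_0$ versus $y$ does not work, because $\Psi_\theta(\rho_{x_0})=0$ while $\Psi_\theta(\rho_y)$ may be nonzero. The construction needs no further ingredients beyond positivity of $E_\theta$. Theorem~\ref{thm:port-frame-criterion} can optionally be cited to identify singularity of $E_\theta$ with $-e^{-i\theta}\in\sigma(T)$, but this is not needed.
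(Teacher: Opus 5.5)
Your proposal is correct and follows essentially the same route as the paper: take a unit vector in $\ker E_\theta$ (hence annihilated by every $M_i(\theta)$), pair it with an orthogonal unit vector, and compare the two states $\frac{1}{\sqrt2}(y\pm x_0)$, which have the same image. Your explicit phase argument showing $\rho_u\neq\rho_v$ fills in a step the paper only asserts.
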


\begin{proof}
Since $E_\theta\ge 0$, there exists a unit vector $k\in\ker(E_\theta)$ such that
\begin{equation}
0=\langle E_\theta k,k\rangle=\sum_{i=1}^{m}\langle M_i(\theta)k, M_i(\theta)k\rangle=\sum_{i=1}^{m}\|M_i(\theta)k\|^2,
\end{equation}
and hence $M_i(\theta)k=0, \; i=1,\dots,m.$\\
\\
Now select a unit vector $u\in\mathcal H$ such that $u\perp k$, and choose $x=\frac{u+k}{\sqrt2}$, and $y=\frac{u-k}{\sqrt2}.$\\
\\
Then $x$ and $y$ are unit vectors with $\rho_x\neq \rho_y$.  $M_i(\theta)k=0$ implies
\begin{equation}
M_i(\theta)x =\frac{1}{\sqrt2}M_i(\theta)u=M_i(\theta)y,
\end{equation}
for every $i=1,\dots,m$.\\
\\
Therefore $\Psi_\theta(\rho_x)=\Psi_\theta(\rho_y)$, and $\Psi_\theta$ is not pure-state injective.
\end{proof}

\noindent The following qubit example shows explicitly how singularity of $E_\theta$ causes the port completely positive map to fail to be a operator-valued frame and pure-state injective.

\begin{example}\label{singular port map}
Let $\mathcal H=\mathcal K=\mathbb C^2$, and consider the unitary channels $\Phi_A(\rho)=\rho, \;\Phi_B(\rho)=Z\rho Z$, where $Z=|0\rangle\langle 0|-|1\rangle\langle 1|$.\\
\\
Thus we may take $A_1=I, \; B_1=Z.$ For $\theta=0$, the port Kraus family is
\begin{equation}
M_1(0)=A_1+B_1=I+Z=2|0\rangle\langle 0|,
\end{equation}
so $E_0=M_1(0)^*M_1(0)=4|0\rangle\langle 0|$ is singular. Hence $\{M_1(0)\}$ is not an operator-valued frame.\\
\\
Moreover, we have 
\begin{equation}
\Psi_0(\rho)=\frac{1}{4}M_1(0)\rho M_1(0)^*=|0\rangle\langle 0|\rho |0\rangle\langle 0|.
\end{equation}
Choose $|+\rangle=\frac{|0\rangle+|1\rangle}{\sqrt2}, \; |-\rangle=\frac{|0\rangle-|1\rangle}{\sqrt2},$ then $\rho_+\neq \rho_-$, but $\Psi_0(\rho_+)=\Psi_0(\rho_-)=\frac{1}{2}|0\rangle\langle 0|$. Thus $\Psi_0$ is not pure-state injective.
\end{example}

\noindent We now extend the discussion in Section~3 to general completely positive maps. The key point is that the proofs of Theorem~\ref{PR iff PSIC}, Lemma~\ref{dim operator system}, and the corresponding rank formula rely only on Kraus computations together with the vectorization identity, and do not use trace preservation. For CP maps, define the complementary CP map by the same canonical Kraus formula. Therefore these arguments apply directly to the port map $\Psi_\theta$. In particular, we can verify that
\begin{equation}\label{port psi psic}
\Psi_\theta \text{ is pure-state injective }
\iff
\Psi_\theta^c \text{ is pure-state informationally complete},
\end{equation}
and
\begin{equation}\label{port rank formula}
\dim S_{\Psi_\theta^c}
=
\dim \operatorname{Range}(\Psi_\theta^*)
=
\operatorname{rank}\!\left(\sum_{i=1}^{m}M_i(\theta)^T\otimes M_i(\theta)^*\right).
\end{equation}
\\
Now let $X\in B(\mathcal K)$. A direct computation gives
\begin{align}\label{port adjoint}
\Psi_\theta^*(X)
&=
\sum_{i=1}^{m}\frac{1}{4}M_i(\theta)^*XM_i(\theta) \\
&=\frac{1}{4}(
\Phi_A^*(X)+\Phi_B^*(X)
+e^{i\theta}\sum_{i=1}^{m}A_i^*XB_i
+e^{-i\theta}\sum_{i=1}^{m}B_i^*XA_i \notag ).
\end{align}
Thus the adjoint of the port map contains interferometric cross terms
\begin{equation}
\Gamma_{AB}(X):=\sum_{i=1}^{m}A_i^*XB_i,
\qquad
\Gamma_{BA}(X):=\sum_{i=1}^{m}B_i^*XA_i.
\end{equation}
Equivalently, let $\{f_\mu\}$ be an orthonormal basis of $\mathcal K$ and $E_{\nu\mu}=|f_\nu\rangle\langle f_\mu|$, then
\begin{equation}
S_{\Psi_\theta^c}
=
\operatorname{span}
\Bigl\{
\Phi_A^*(E_{\nu\mu})+\Phi_B^*(E_{\nu\mu})
+e^{i\theta}\Gamma_{AB}(E_{\nu\mu})
+e^{-i\theta}\Gamma_{BA}(E_{\nu\mu})
:
\mu,\nu
\Bigr\}.
\end{equation}
\\
The dimension of the complementary operator system of the port map is determined by the Kraus operators of the two arms through the expansion
\begin{align}\label{port rank formula expanded}
\sum_{i=1}^{m}M_i(\theta)^T\otimes M_i(\theta)^*
&=
\sum_{i=1}^{m}A_i^T\otimes A_i^*
+\sum_{i=1}^{m}B_i^T\otimes B_i^* \\
&\quad
+e^{-i\theta}\sum_{i=1}^{m}A_i^T\otimes B_i^*
+e^{i\theta}\sum_{i=1}^{m}B_i^T\otimes A_i^*. \notag
\end{align}
\\
This is the operator system form of the enhancement mechanism: the first two sums come from the
arm channels separately, while the last two are the new interferometric contributions.\\
\\
Consequently, the necessary condition from Section~3 extends directly to the port map: if
\begin{equation}
\operatorname{rank}\!\left(\sum_{i=1}^{m}M_i(\theta)^T\otimes M_i(\theta)^*\right)\leq N(d),
\qquad d=\dim\mathcal H,
\end{equation}
then $\Psi_\theta$ is not pure-state injective, and thus not phase retrievable. More generally, every discussion in Section~3 whose proof depends only on Kraus representations extends to the port map in the same way.

\subsection{Classical mixing versus interferometric coupling}

A quantum interferometer provides a physical realization of the operator-valued frame coupling introduced above, by recombining the associated Kraus operators of the two arm channels. There are, however, other natural ways to combine two frames. Let $\{A_i\}_{i=1}^{m}$ and $\{B_i\}_{i=1}^{m}$ be Kraus operators of two quantum channels $\Phi_A$ and $\Phi_B$, and hence Parseval operator-valued frames. A standard alternative way to combine them is by weighted union.
\begin{equation}
\mathcal A\oplus_p\mathcal B := \{\sqrt p\,A_i\}_{i=1}^{m}\cup \{\sqrt{1-p}\,B_i\}_{i=1}^{m},
\; 0\le p\le 1.
\end{equation}
The associated channel is the classical mixture: $\Gamma_p(\rho)=p\Phi_A(\rho)+(1-p)\Phi_B(\rho)$.\\
\\
 Physically, $\Gamma_p$ corresponds to selecting one arm by a classical random switch, or more generally by a classically controlled device, rather than coherently recombining the two paths \cite{Yuan2018,Wechs2021,Dong2019}.\\
 \\
The essential difference from interferometric coupling can already be seen from the adjoint maps. Indeed, for classical mixing, we have
\begin{equation}
\Gamma_p^*(X)=p\Phi_A^*(X)+(1-p)\Phi_B^*(X),
\end{equation}
and hence $S_{\Gamma_p^c}= \operatorname{Range}(\Gamma_p^*) \subseteq S_{\Phi_A^c}+S_{\Phi_B^c}.$\\
\\
Thus classical mixing cannot enlarge the complementary operator system outside those directions already present in the two arm channels.\\
\\
By contrast, the interferometric port map satisfies
\begin{equation}
\Psi_\theta^*(X)
=\frac{1}{4}
(\Phi_A^*(X)+\Phi_B^*(X)
+e^{i\theta}\Gamma_{AB}(X)
+e^{-i\theta}\Gamma_{BA}(X)),
\end{equation}
so in addition to the two arm contributions it contains the cross terms
\begin{equation}
\Gamma_{AB}(X):=\sum_{i=1}^{m}A_i^*XB_i,
\qquad
\Gamma_{BA}(X):=\sum_{i=1}^{m}B_i^*XA_i.
\end{equation}
Therefore, if either cross term space escapes $S_{\Phi_A^c}+S_{\Phi_B^c}$, coherent recombination
may enlarge the complementary operator system, something classical mixing cannot achieve.\\
\\
Thus, the discussion of this subsection is that classical mixing and interferometric coupling combine the two arm channels in fundamentally different ways. Classical mixing remains within the operator system directions already contributed by the two arms, whereas interferometric coupling introduces additional cross terms through coherent recombination. It is precisely these new terms that make interferometric coupling the more effective mechanism for enhancing phase retrieval behavior. Section~5 will make this mechanism quantitative through the injectivity index and some explicit examples.

\section{Quantitative Index and Illustrative Examples}

\subsection{Injectivity indices}

The results of Sections~3 and~4 describe the injectivity properties of quantum channels and of the port maps arising from quantum interferometry. To study these objects quantitatively, it is useful to introduce indices that measure how strongly they separate quantum states. This viewpoint is also consistent with recent work on irreversibility, information loss, and incompatibility in quantum information theory \cite{LuoSun2024,LiLuoSunWang2025,GuoLuoZhang2025,SunLi2024}.\\
\\
Let $D(\mathcal H):=\{\rho\in B(\mathcal H): \rho\ge 0,\ \operatorname{Tr}(\rho)=1\}$ denote the set of quantum states on a finite-dimensional Hilbert space $\mathcal H$, and let $H_0(\mathcal H):=\{X\in B(\mathcal H): X=X^*,\ \operatorname{Tr}(X)=0\}$
be the traceless Hermitian subspace. 
The space $H_0(\mathcal H)$ is the natural space for studying injectivity on states, because differences of states belong to $H_0(\mathcal H)$, and conversely every nonzero element of $H_0(\mathcal H)$ is, up to a positive scalar, the difference of two distinct states.\\
\\
This motivates the following definition.

\begin{definition}\label{def:CP_injectivity_index}
Let $\Phi:B(\mathcal H)\to B(\mathcal K)$ be a completely positive map. 
The \emph{CP injectivity} of $\Phi$ is defined by
\begin{equation}
I(\Phi):=
\inf_{X\in H_0(\mathcal H)\setminus\{0\}}
\frac{\|\Phi(X)\|_2}{\|X\|_2}.
\end{equation}
\end{definition}

\noindent The next proposition gives the basic interpretation of this index.

\begin{theorem}\label{prop:I_state_injective}
Let $\Phi:B(\mathcal H)\to B(\mathcal K)$ be a completely positive map. Then $I(\Phi)\ge 0,$ and the following are equivalent:
\begin{enumerate}
\item[(i)] $I(\Phi)>0$;
\item[(ii)] $\ker(\Phi)\cap H_0(\mathcal H)=\{0\}$;
\item[(iii)] $\Phi$ is injective on $D(\mathcal H)$.
\end{enumerate}
\end{theorem}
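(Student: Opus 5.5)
Nonnegativity of $I(\Phi)$ is immediate, since it is an infimum of quotients of norms. For the equivalences I will use that $H_0(\mathcal H)$ is a finite-dimensional real vector space and that $\Phi$ restricted to it is a real-linear map into $B(\mathcal K)$. I will prove (i)$\Leftrightarrow$(ii) by compactness and (ii)$\Leftrightarrow$(iii) from the relation between traceless Hermitian operators and differences of states noted just before the definition.

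For (i)$\Leftrightarrow$(ii), first use homogeneity: for $t>0$ we have $\|\Phi(tX)\|_2/\|tX\|_2=\|\Phi(X)\|_2/\|X\|_2$. Hence
\begin{equation*}
I(\Phi)=\min\{\|\Phi(X)\|_2 : X\in H_0(\mathcal H),\ \|X\|_2=1\}.
\end{equation*}
The unit sphere of the finite-dimensional space $H_0(\mathcal H)$ is compact and $X\mapsto\|\Phi(X)\|_2$ is continuous, so the minimum is attained at some $X_0$. If $I(\Phi)=0$, then $\Phi(X_0)=0$ with $X_0\neq 0$, so (ii) fails. Conversely, if some nonzero $X\in H_0(\mathcal H)$ has $\Phi(X)=0$, its quotient is $0$ and therefore $I(\Phi)=0$. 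This gives (i)$\Leftrightarrow$(ii).

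For (ii)$\Rightarrow$(iii), suppose $\rho,\sigma\in D(\mathcal H)$ with $\Phi(\rho)=\Phi(\sigma)$. Then $X=\rho-\sigma$ is Hermitian with trace $0$, and $\Phi(X)=0$ by linearity. By (ii), $X=0$, so $\rho=\sigma$.

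For (iii)$\Rightarrow$(ii), I argue by contraposition. Take a nonzero $X\in H_0(\mathcal H)\cap\ker\Phi$ and write its Jordan decomposition $X=X_+-X_-$ with $X_\pm\ge 0$. Since $\operatorname{Tr}X=0$, we get $\operatorname{Tr}X_+=\operatorname{Tr}X_-=:t$, and $t>0$ because $X\neq 0$. Set $\rho=X_+/t$ and $\sigma=X_-/t$. These are states, they are distinct because $X\neq0$, and $\Phi(\rho)-\Phi(\sigma)=\Phi(X)/t=0$. So $\Phi$ is not injective on $D(\mathcal H)$.

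The only points needing care are that the infimum is attained, which uses finite dimensionality and compactness, and the Jordan-decomposition step. Neither needs trace preservation, so the theorem holds for general CP maps such as $\Psi_\theta$; complete positivity itself is not used.
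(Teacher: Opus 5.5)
Your proof is correct and follows essentially the same route as the paper: compactness of the unit sphere of $H_0(\mathcal H)$ gives (i)$\Leftrightarrow$(ii), and the correspondence between traceless Hermitian operators and positive multiples of differences of states gives (ii)$\Leftrightarrow$(iii). The only difference is that you write out the Jordan decomposition $X=X_+-X_-$ in (iii)$\Rightarrow$(ii), whereas the paper simply cites the observation made just before the definition of $I(\Phi)$.
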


\begin{proof}
The inequality $I(\Phi)\ge 0$ is immediate from the definition.\\
\\
We first prove the equivalence of (i) and (ii). 
If there exists a nonzero $X\in H_0(\mathcal H)$ such that $\Phi(X)=0$, then $\frac{\|\Phi(X)\|_2}{\|X\|_2}=0,$ and hence $I(\Phi)=0$. \\
\\
Conversely, assume that $\ker(\Phi)\cap H_0(\mathcal H)=\{0\},$ and consider the unit sphere $S:=\{X\in H_0(\mathcal H): \|X\|_2=1\}.$ Since $H_0(\mathcal H)$ is finite dimensional, and the map $X\mapsto \|\Phi(X)\|_2$ is continuous on $S$, it attains its minimum at some $X_0\in S$. By the assumption, this minimum cannot be zero. Therefore
\begin{equation}
I(\Phi)=\min_{X\in S}\|\Phi(X)\|_2>0.
\end{equation}

\noindent Next we prove that (ii) implies (iii). 
Suppose that there exist two states $\rho_1,\rho_2\in D(\mathcal H)$ such that $\Phi(\rho_1)=\Phi(\rho_2)$. Then we have $\Phi(\rho_1-\rho_2)=0.$ Since $\rho_1-\rho_2\in H_0(\mathcal H)$, condition (ii) implies $\rho_1=\rho_2,$ hence $\Phi$ is injective on $D(\mathcal H)$.\\
\\
Finally, we prove that (iii) implies (ii). 
Suppose that there exists a nonzero $X\in H_0(\mathcal H)$ such that $\Phi(X)=0.$ Then there exist distinct states $\rho_1,\rho_2\in D(\mathcal H)$ and $t>0$ such that $X=t(\rho_1-\rho_2).$ Applying $\Phi$ gives
\begin{equation}
0=\Phi(X)=t\bigl(\Phi(\rho_1)-\Phi(\rho_2)\bigr),
\end{equation}
and then $\Phi(\rho_1)=\Phi(\rho_2)$, which contradicts the injectivity of $\Phi$.
\end{proof}

\noindent For qubit systems, this condition also agrees with the pure-state injectivity discussed earlier.

\begin{corollary}\label{cor:qubit_pure_state_equiv}
Let $\Phi:B(\mathbb C^2)\to B(\mathcal K)$ be a completely positive map. Then
\begin{equation}
I(\Phi)>0
\quad\Longleftrightarrow\quad
\Phi \text{ is pure-state injective.}
\end{equation}
\end{corollary}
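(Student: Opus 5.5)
By Theorem~\ref{prop:I_state_injective}, $I(\Phi)>0$ is equivalent to condition (ii), $\ker(\Phi)\cap H_0(\mathbb C^2)=\{0\}$, and also to condition (iii), injectivity of $\Phi$ on $D(\mathbb C^2)$. So the plan is to show that, for a qubit, injectivity on pure states is equivalent to (ii). Throughout, pure-state injectivity of a CP map is understood exactly as for channels: $\Phi(\rho_x)=\Phi(\rho_y)$ implies $\rho_x=\rho_y$. Only complete positivity and linearity of $\Phi$ are used; trace preservation is not needed.

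The forward direction is immediate. If $I(\Phi)>0$, then by (iii) $\Phi$ is injective on all of $D(\mathbb C^2)$. Since every pure state $\rho_x$ lies in $D(\mathbb C^2)$, $\Phi$ is in particular injective on pure states.

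For the converse, assume $\Phi$ is pure-state injective, and let $X\in H_0(\mathbb C^2)$ be nonzero with $\Phi(X)=0$; we aim for a contradiction, which establishes (ii). The key qubit-specific step is a spectral observation:
\begin{itemize}
\item $X$ is Hermitian, traceless and $2\times 2$, so its eigenvalues are $\lambda$ and $-\lambda$ with $\lambda>0$.
\item Taking an orthonormal eigenbasis $\{u,v\}$, we get $X=\lambda\bigl(|u\rangle\langle u|-|v\rangle\langle v|\bigr)=\lambda(\rho_u-\rho_v)$.
\item Hence every nonzero traceless Hermitian qubit operator is a positive multiple of a difference of two distinct pure states, since $u\perp v$ forces $\rho_u\neq\rho_v$.
\end{itemize}
Linearity then gives $0=\Phi(X)=\lambda\bigl(\Phi(\rho_u)-\Phi(\rho_v)\bigr)$, so $\Phi(\rho_u)=\Phi(\rho_v)$ with $\rho_u\neq\rho_v$. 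This contradicts pure-state injectivity. Therefore (ii) holds, and Theorem~\ref{prop:I_state_injective} yields $I(\Phi)>0$.

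The only substantive point is this spectral decomposition of a traceless Hermitian qubit operator. It is exactly where $\dim\mathcal H=2$ is used: in higher dimensions a traceless Hermitian operator can have rank larger than two, and then it need not be a multiple of a difference of just two pure states. I expect no further obstacle.
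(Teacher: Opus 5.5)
Your proposal is correct and follows essentially the same argument as the paper. The forward direction uses injectivity on all of $D(\mathbb C^2)$ from Theorem~\ref{prop:I_state_injective}. The converse writes a nonzero traceless Hermitian qubit operator as $\lambda(\rho_u-\rho_v)$ with orthonormal eigenvectors $u,v$, which contradicts pure-state injectivity and so establishes condition (ii).
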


\begin{proof}
If $I(\Phi)>0$, then Proposition~\ref{prop:I_state_injective} shows that $\Phi$ is injective on all states, hence in particular on pure states.\\
\\
Conversely, assume that $\Phi$ separates pure states and suppose that there exists $0\neq X\in H_0(\mathbb C^2)$ satisfying $\Phi(X)=0.$
Since $X$ is a nonzero traceless Hermitian $2\times 2$ matrix, its eigenvalues are $\lambda$ and $-\lambda$ for some $\lambda>0$. 
Let $u,v$ be corresponding orthonormal eigenvectors. Then $X=\lambda(\rho_u-\rho_v),$ where $\rho_u,\rho_v$ are pure states. Applying $\Phi$ gives
\begin{equation}
0=\Phi(X)=\lambda\bigl(\Phi(\rho_u)-\Phi(\rho_v)\bigr),
\end{equation}
hence $\Phi(\rho_u)=\Phi(\rho_v),$ which implies $\rho_u=\rho_v$ by assumption and that is impossible. Therefore
\begin{equation}
\ker(\Phi)\cap H_0(\mathbb C^2)=\{0\},
\end{equation}
and Proposition~\ref{prop:I_state_injective} yields $I(\Phi)>0$.
\end{proof}

\noindent We now derive a computable formula for the CP injectivity index $I(\Phi)$. Since completely positive maps preserve Hermitian operators, we may choose an orthonormal basis $\{F_i\}_{i=1}^{d^2-1}$ of $H_0(\mathcal H)$ and an orthonormal basis $\{G_\mu\}_{\mu=0}^{n^2-1}$ of the real Hilbert space of Hermitian operators on $\mathcal K$, where $n=\dim\mathcal K$. Define
\begin{equation}
\mathcal T_\Phi=\bigl(t_{\mu i}\bigr)_{0\le \mu\le n^2-1,\ 1\le i\le d^2-1},
\qquad
t_{\mu i}:=\langle G_\mu,\Phi(F_i)\rangle,
\end{equation}
and then we have the following.

\begin{proposition}\label{prop:I_matrix_formula}
With the notation above, for a completely positive map $\Phi$, we have
\begin{equation}
I(\Phi)=\sqrt{\lambda_{\min}\bigl(\mathcal T_\Phi^*\mathcal T_\Phi\bigr)}.
\end{equation}
\end{proposition}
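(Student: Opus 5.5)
The plan is to reduce $I(\Phi)$ to the smallest singular value of a real matrix, using the orthonormal bases fixed above. We work throughout over the reals. $H_0(\mathcal H)$ is a real inner product space of dimension $d^2-1$ with orthonormal basis $\{F_i\}$. The Hermitian operators on $\mathcal K$ form a real inner product space of dimension $n^2$ with orthonormal basis $\{G_\mu\}$. On Hermitian operators the Hilbert--Schmidt inner product $\operatorname{Tr}(Y^*X)$ is real, so these really are real inner product spaces. Hence every entry $t_{\mu i}$ is real, and $\mathcal T_\Phi^*\mathcal T_\Phi=\mathcal T_\Phi^T\mathcal T_\Phi$ is a real symmetric positive semidefinite $(d^2-1)\times(d^2-1)$ matrix.

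First step: coordinates on the domain. Write $X\in H_0(\mathcal H)$ as $X=\sum_i c_iF_i$ with $c\in\mathbb R^{d^2-1}$. Orthonormality gives $\|X\|_2=\|c\|$ (Euclidean norm).

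Second step: coordinates on the image. Since $\Phi$ is completely positive, it is Hermiticity preserving, because $\Phi(X)^*=\sum_i A_iX^*A_i^*$. So $\Phi(X)$ lies in the real span of $\{G_\mu\}$, and expanding in that orthonormal basis gives $\Phi(X)=\sum_\mu\langle G_\mu,\Phi(X)\rangle G_\mu$. By real-linearity, $\langle G_\mu,\Phi(X)\rangle=\sum_i t_{\mu i}c_i=(\mathcal T_\Phi c)_\mu$. Parseval then gives $\|\Phi(X)\|_2=\|\mathcal T_\Phi c\|$.

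Third step: Rayleigh quotient. The index becomes
\[
I(\Phi)^2=\inf_{c\neq0}\frac{\|\mathcal T_\Phi c\|^2}{\|c\|^2}=\inf_{c\neq0}\frac{c^T\mathcal T_\Phi^T\mathcal T_\Phi c}{c^Tc}.
\]
By the Rayleigh--Ritz theorem for the real symmetric matrix $\mathcal T_\Phi^T\mathcal T_\Phi$, this infimum equals $\lambda_{\min}(\mathcal T_\Phi^*\mathcal T_\Phi)$, and it is attained at a corresponding eigenvector. Since $I(\Phi)\ge0$, taking square roots yields the formula.

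The only point requiring care is the second step. One has to justify that $\Phi(X)$ is Hermitian, so that the real basis $\{G_\mu\}$ suffices to capture the full Hilbert--Schmidt norm, and that the coefficients $t_{\mu i}$ are real. If they were not real, the adjoint $^*$ would have to be read as the conjugate transpose, and the argument would still go through with complex $c$ restricted to real vectors. The rest is routine linear algebra.
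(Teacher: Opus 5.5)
Your proof is correct and follows the paper's argument essentially step for step: expand $X$ in $\{F_i\}$ and $\Phi(X)$ in $\{G_\mu\}$ to get $\|\Phi(X)\|_2=\|\mathcal T_\Phi x\|$, then minimize the Rayleigh quotient of $\mathcal T_\Phi^*\mathcal T_\Phi$; your explicit checks that $\Phi$ preserves Hermiticity and that the $t_{\mu i}$ are real fill in details the paper leaves implicit. Your closing aside is unnecessary and would not work as stated, because for a complex $\mathcal T_\Phi$, restricting to real $c$ yields $\lambda_{\min}$ of the real part of $\mathcal T_\Phi^*\mathcal T_\Phi$ rather than of $\mathcal T_\Phi^*\mathcal T_\Phi$ itself; since the entries are provably real, this does not affect your proof.
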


\begin{proof}
For any $X\in H_0(\mathcal H)$, we have $X=\sum_{i=1}^{d^2-1}x_iF_i.$ Since $\{F_i\}$ is orthonormal,
\begin{equation}
\|X\|_2^2=\sum_{i=1}^{d^2-1}x_i^2.
\end{equation}
Also,
\begin{equation}
	\Phi(X)=\sum_{i=1}^{d^2-1}x_i\Phi(F_i) =\sum_{\mu=0}^{n^2-1}\left(\sum_{i=1}^{d^2-1}t_{\mu i}x_i\right)G_\mu.
\end{equation}
Because $\{G_\mu\}$ is also orthonormal, it follows that $\|\Phi(X)\|_2^2=\|\mathcal T_\Phi x\|_2^2.$\\
\\
Therefore
\begin{equation}
I(\Phi)^2 =\inf_{x\neq 0}\frac{\|\mathcal T_\Phi x\|_2^2}{\|x\|_2^2} =
\inf_{x\neq 0}\frac{\langle x,\mathcal T_\Phi^*\mathcal T_\Phi x\rangle}{\langle x,x\rangle}.
\end{equation}
Since $\mathcal T_\Phi^*\mathcal T_\Phi$ is positive semidefinite, we have $I(\Phi)^2=\lambda_{\min}\bigl(\mathcal T_\Phi^*\mathcal T_\Phi\bigr),$ which proves the claim after taking square roots.
\end{proof}

\noindent We next summarize several basic properties of $I(\Phi)$.

\begin{proposition}\label{prop:I_properties}
Let $\Phi,\Psi:B(\mathcal H)\to B(\mathcal K)$ be completely positive maps. Then the following hold.

\begin{enumerate}
\item[(i)] \textbf{Positive homogeneity.} For every scalar $c\ge 0$, $I(c\Phi)=c\,I(\Phi).$

\item[(ii)] \textbf{Unitary invariance.} If $\mathcal U(X)=UXU^*$ and $\mathcal V(Y)=VYV^*$ are unitary conjugation maps, then $I(\mathcal V\circ \Phi\circ \mathcal U)=I(\Phi).$

\item[(iii)] \textbf{Continuity.} Let
\begin{equation}
\|\Xi\|_{0\to 2}:=
\sup_{X\in H_0(\mathcal H)\setminus\{0\}}
\frac{\|\Xi(X)\|_2}{\|X\|_2},
\end{equation}
then we have $|I(\Phi)-I(\Psi)|\le \|\Phi-\Psi\|_{0\to 2}.$
\end{enumerate}
\end{proposition}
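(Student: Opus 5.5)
I will prove the three items directly from Definition~\ref{def:CP_injectivity_index}, treating $I(\Phi)$ as the infimum of the ratio $\|\Phi(X)\|_2/\|X\|_2$ over the nonzero elements of $H_0(\mathcal H)$.

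For (i), linearity gives $\|(c\Phi)(X)\|_2 = c\|\Phi(X)\|_2$ for $c\ge 0$. Dividing by $\|X\|_2$ and taking the infimum, the constant factor $c$ comes out, so $I(c\Phi)=cI(\Phi)$. The case $c=0$ is included, since both sides are then $0$.

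For (ii), I will use two facts.
\begin{enumerate}
\item[(a)] The map $\mathcal U$ is a bijection of $H_0(\mathcal H)$ onto itself: conjugation by a unitary preserves self-adjointness and trace, and its inverse is $X\mapsto U^*XU$. It is also a Hilbert--Schmidt isometry, since $\|UXU^*\|_2=\|X\|_2$.
\item[(b)] The map $\mathcal V$ is a Hilbert--Schmidt isometry on $B(\mathcal K)$.
\end{enumerate}
Hence
\[
\frac{\|\mathcal V\Phi\mathcal U(X)\|_2}{\|X\|_2}=\frac{\|\Phi(\mathcal U(X))\|_2}{\|\mathcal U(X)\|_2}.
\]
As $X$ ranges over $H_0(\mathcal H)\setminus\{0\}$, so does $Y=\mathcal U(X)$, so the two infima coincide.

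For (iii), I will use the triangle inequality. Fix nonzero $X\in H_0(\mathcal H)$. Then
\[
\|\Phi(X)\|_2\le \|\Psi(X)\|_2+\|(\Phi-\Psi)(X)\|_2\le \|\Psi(X)\|_2+\|\Phi-\Psi\|_{0\to 2}\|X\|_2 .
\]
Dividing by $\|X\|_2$ gives
\[
I(\Phi)\le \frac{\|\Psi(X)\|_2}{\|X\|_2}+\|\Phi-\Psi\|_{0\to2}.
\]
Taking the infimum over $X$ on the right yields $I(\Phi)\le I(\Psi)+\|\Phi-\Psi\|_{0\to2}$. Exchanging the roles of $\Phi$ and $\Psi$ gives the reverse inequality, and the two together give the claim.

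The norm $\|\Phi-\Psi\|_{0\to2}$ is finite because $H_0(\mathcal H)$ is finite dimensional, so the inequality is meaningful. Nothing here requires complete positivity beyond linearity, and I expect no genuine obstacle. The only point to check carefully is in (ii): the substitution $Y=\mathcal U(X)$ must be a bijection of $H_0(\mathcal H)$, so that the infimum is taken over the same set.
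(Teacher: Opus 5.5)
Your proposal is correct and follows essentially the same argument as the paper. Item (i) is read off the definition. Item (ii) uses the Hilbert--Schmidt invariance of $\mathcal V$ and the substitution $Y=\mathcal U(X)$; you make explicit the bijectivity of $\mathcal U$ on $H_0(\mathcal H)$, which the paper leaves implicit. Item (iii) uses the triangle inequality and then interchanges $\Phi$ and $\Psi$. The only difference there is cosmetic: you derive $I(\Phi)\le I(\Psi)+\|\Phi-\Psi\|_{0\to 2}$, whereas the paper derives the equivalent bound $I(\Phi)\ge I(\Psi)-\|\Phi-\Psi\|_{0\to 2}$ over unit-norm $X$.
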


\begin{proof}
Property (i) is immediate from the definition.\\
\\
For (ii), note that $\mathcal V$ preserves the Hilbert--Schmidt norm on the output space, then
\begin{equation}
\frac{\|(\mathcal V\circ\Phi\circ\mathcal U)(X)\|_2}{\|X\|_2}
=
\frac{\|\Phi(\mathcal U(X))\|_2}{\|\mathcal U(X)\|_2}.
\end{equation}
Taking the infimum over $X\in H_0(\mathcal H)\setminus\{0\}$ will prove the claim.\\
\\
For (iii), let $L=\|\Phi-\Psi\|_{0\to 2}.$
Then for every $X\in H_0(\mathcal H)$ with $\|X\|_2=1$,
\begin{equation}
\|\Phi(X)\|_2
\ge
\|\Psi(X)\|_2-\|(\Phi-\Psi)(X)\|_2
\ge
I(\Psi)-L.
\end{equation}
Taking the infimum over such $X$ gives $I(\Phi)\ge I(\Psi)-L.$ Interchanging $\Phi$ and $\Psi$ yields the reverse inequality and then completes the proof.
\end{proof}

\noindent To measure the overall performance of a completely positive map with respect to state injectivity, it is also natural to introduce an averaged quantity.

\begin{definition}\label{def:Iav}
Let $\Phi:B(\mathcal H)\to B(\mathcal K)$ be a completely positive map. 
We define the \emph{average CP injectivity} of $\Phi$ by
\begin{equation}
I_{\mathrm{av}}(\Phi):=
\frac{\|\mathcal T_\Phi\|_F}{\sqrt{d^2-1}},
\end{equation}
where $\|\cdot\|_F$ denotes the Frobenius norm. Equivalently,
\begin{equation}
I_{\mathrm{av}}(\Phi)
=
\left(
\frac{\operatorname{Tr}\bigl(\mathcal T_\Phi^*\mathcal T_\Phi\bigr)}{d^2-1}
\right)^{1/2}
=
\left(
\frac{1}{d^2-1}\sum_{i=1}^{d^2-1}\|\Phi(F_i)\|_2^2
\right)^{1/2}.
\end{equation}
\end{definition}
\noindent The following property is then immediate since $\|\Phi\|_{0\to 2}^2 = \lambda_{\max}\bigl(\mathcal T_\Phi^*\mathcal T_\Phi\bigr).$

\begin{proposition}\label{prop:I_Iav_relation}
The quantity $I_{\mathrm{av}}(\Phi)$ is independent of the choice of orthonormal basis of $H_0(\mathcal H)$ and satisfies
\begin{equation}
I(\Phi)\le I_{\mathrm{av}}(\Phi)\le \|\Phi\|_{0\to 2}.
\end{equation}
\end{proposition}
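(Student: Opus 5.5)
The plan is to reduce everything to the matrix $\mathcal T_\Phi$ from Proposition~\ref{prop:I_matrix_formula}. The two claims are basis independence of $I_{\mathrm{av}}(\Phi)$ and the chain of inequalities. Both follow once I can show that $\mathcal T_\Phi^*\mathcal T_\Phi$ is, in suitable coordinates, the matrix of a positive semidefinite operator on the real Hilbert space $H_0(\mathcal H)$.

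\textbf{Step 1: coordinate representation.} I will treat the restriction of $\Phi$ to $H_0(\mathcal H)$ as a real-linear map $\Phi_0:H_0(\mathcal H)\to B_{sa}(\mathcal K)$. Here $H_0(\mathcal H)$ carries the Hilbert--Schmidt inner product, which is real on Hermitian operators. This restriction makes sense because $\Phi$ is CP, so it preserves Hermiticity. Relative to the orthonormal bases $\{F_i\}$ and $\{G_\mu\}$, the matrix of $\Phi_0$ is exactly $\mathcal T_\Phi$; this is the computation already carried out in the proof of Proposition~\ref{prop:I_matrix_formula}. In particular $\mathcal T_\Phi$ is a real matrix, and $\mathcal T_\Phi^*\mathcal T_\Phi=\mathcal T_\Phi^T\mathcal T_\Phi$ is the matrix of $\Phi_0^*\Phi_0$ in the basis $\{F_i\}$.

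\textbf{Step 2: basis independence.} Take another orthonormal basis $\{F_i'\}$ of $H_0(\mathcal H)$, related to $\{F_i\}$ by a real orthogonal matrix $O$, and another orthonormal basis $\{G_\mu'\}$ related to $\{G_\mu\}$ by an orthogonal $P$. Then the new matrix is $\mathcal T'_\Phi=P^T\mathcal T_\Phi O$, so $\|\mathcal T'_\Phi\|_F=\|\mathcal T_\Phi\|_F$ by orthogonal invariance of the Frobenius norm. Alternatively, I can observe directly that
\[
\sum_{i=1}^{d^2-1}\|\Phi(F_i)\|_2^2=\operatorname{Tr}(\Phi_0^*\Phi_0),
\]
which is a trace of an operator and hence basis free. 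The only point needing care is that the basis changes are real orthogonal rather than complex unitary. This holds because both spaces are real inner product spaces, so this step is not a genuine obstacle.

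\textbf{Step 3: inequalities.} Let $\lambda_1\le\cdots\le\lambda_{d^2-1}$ be the eigenvalues of the positive semidefinite matrix $\mathcal T_\Phi^*\mathcal T_\Phi$. Then $I_{\mathrm{av}}(\Phi)^2=\frac{1}{d^2-1}\sum_k\lambda_k$ is an average of these eigenvalues, so it lies between $\lambda_1$ and $\lambda_{d^2-1}$. Proposition~\ref{prop:I_matrix_formula} gives $\lambda_1=I(\Phi)^2$. For the upper end I need the stated identity $\|\Phi\|_{0\to2}^2=\lambda_{\max}(\mathcal T_\Phi^*\mathcal T_\Phi)$. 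This follows from the same Rayleigh-quotient computation as in Proposition~\ref{prop:I_matrix_formula}, with the infimum replaced by a supremum, since $\|\Phi(X)\|_2=\|\mathcal T_\Phi x\|_2$ and $\|X\|_2=\|x\|_2$. Taking square roots, which are monotone on $[0,\infty)$, finishes the proof.

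The main (mild) obstacle is Step~1: justifying that the Hilbert--Schmidt norm of $\Phi(X)$ equals the Euclidean norm of $\mathcal T_\Phi x$. This requires $\Phi(X)$ to be Hermitian so that it expands in $\{G_\mu\}$ with real coefficients. I will quote Hermiticity preservation of CP maps for this.
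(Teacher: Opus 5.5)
Your proposal is correct and follows the paper's route: the paper calls the result immediate from $I(\Phi)^2=\lambda_{\min}(\mathcal T_\Phi^*\mathcal T_\Phi)$, $\|\Phi\|_{0\to 2}^2=\lambda_{\max}(\mathcal T_\Phi^*\mathcal T_\Phi)$, and the fact that $I_{\mathrm{av}}(\Phi)^2$ is the average of the eigenvalues of $\mathcal T_\Phi^*\mathcal T_\Phi$. Your Steps~1--2 spell out the details the paper leaves implicit, namely that $\mathcal T_\Phi$ is real because $\Phi$ preserves Hermiticity, and that the Frobenius norm is invariant under real orthogonal changes of basis.
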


\noindent The indices $I(\Phi)$ and $I_{\mathrm{av}}(\Phi)$ are not introduced merely as abstract quantities. Like other quantitative indices for quantum processes, they provide a concrete way to detect degeneracy, compare parameter regimes, and here help to analyze the behavior of the completely positive maps arising from interferometric coupling \cite{LuoSun2024,LiLuoSunWang2025,GuoLuoZhang2025}. In the next subsection, we discuss several concrete examples.

\subsection{Illustrative examples}

We now discuss several interferometric examples. For each pair of arm channels, let $\Psi_\theta$ denote the corresponding port completely positive map generated by the quantum interferometer. The heat maps are shown as functions of the phase $\theta$ and the parameter of the chosen arm families. We focus on the indices $I(\Psi_\theta)$ and $I_{\mathrm{av}}(\Psi_\theta)$ due to the previous discussion. In all figures, darker regions indicate weaker state separation, with the darkest regions corresponding to degeneracy where the index vanishes, while brighter regions indicate stronger separation and therefore better phase retrieval behavior.\\
\\

\begin{example}
We first consider the interference between a qubit amplitude damping channel
and a complete \(Z\)-dephasing channel. The first arm is the amplitude damping
channel with Kraus operators
\begin{equation}
A_0=|0\rangle\langle 0|+\sqrt{1-\gamma}\,|1\rangle\langle 1|,
\qquad
A_1=\sqrt{\gamma}\,|0\rangle\langle 1|,
\end{equation}
where \(0\leq \gamma\leq 1\). For \(0\leq \gamma<1\), this channel is phase
retrievable. The second arm is the complete \(Z\)-dephasing channel with Kraus
operators
\begin{equation}
B_0=|0\rangle\langle 0|,
\qquad
B_1=|1\rangle\langle 1|.
\end{equation}
This channel is not phase retrievable.\\
\\
The corresponding heat maps are shown in Figure~\ref{fig:ex-ad-zdephase}.
They show that the indices depend strongly on the interferometric phase.
In particular, the port map loses injectivity near the destructive-interference
phase, while it is pure-state injective for most parameter values away from
that region.
\end{example}

\begin{figure}[H]
\centering
\includegraphics[width=0.495\textwidth]{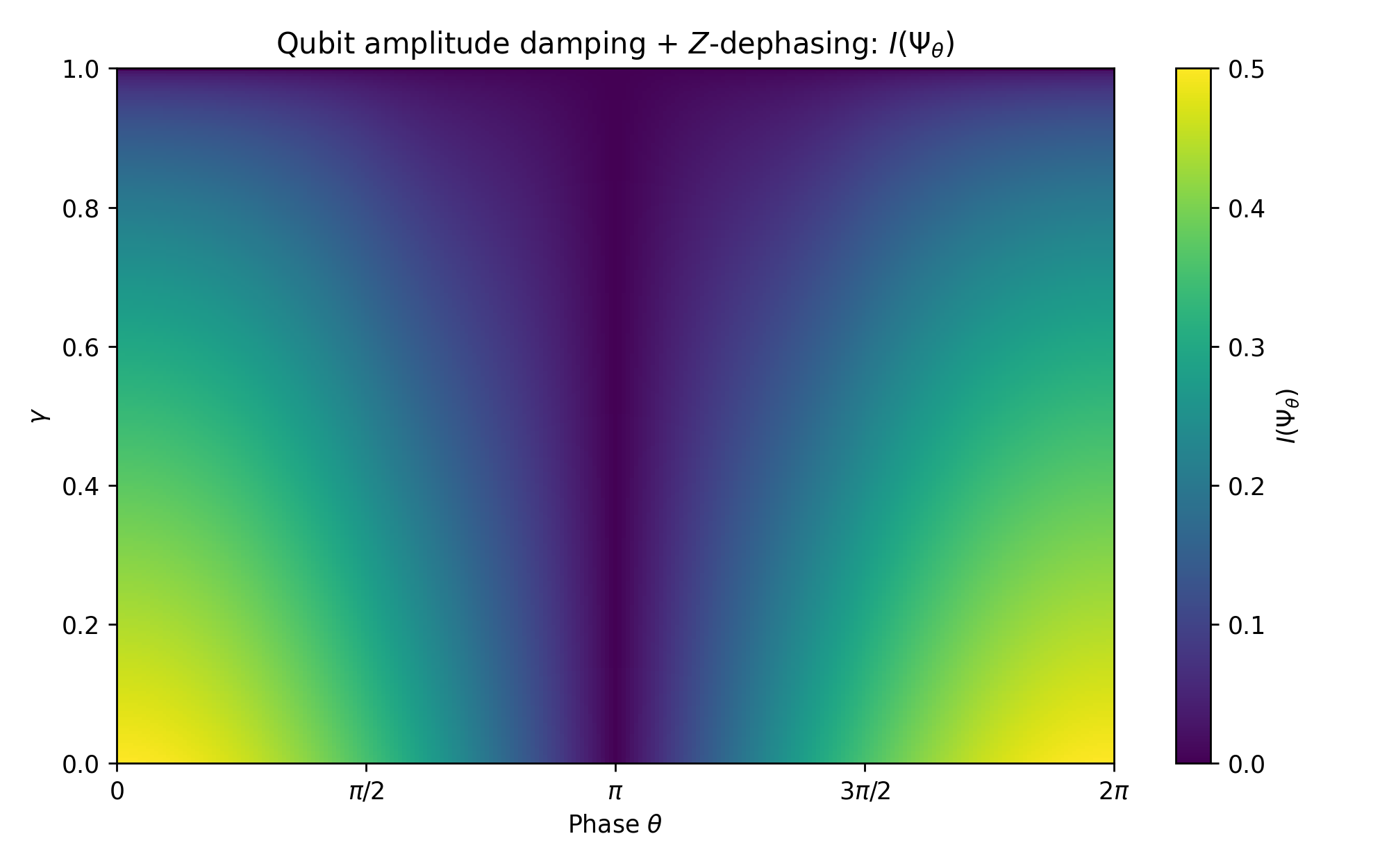}
\hfill
\includegraphics[width=0.495\textwidth]{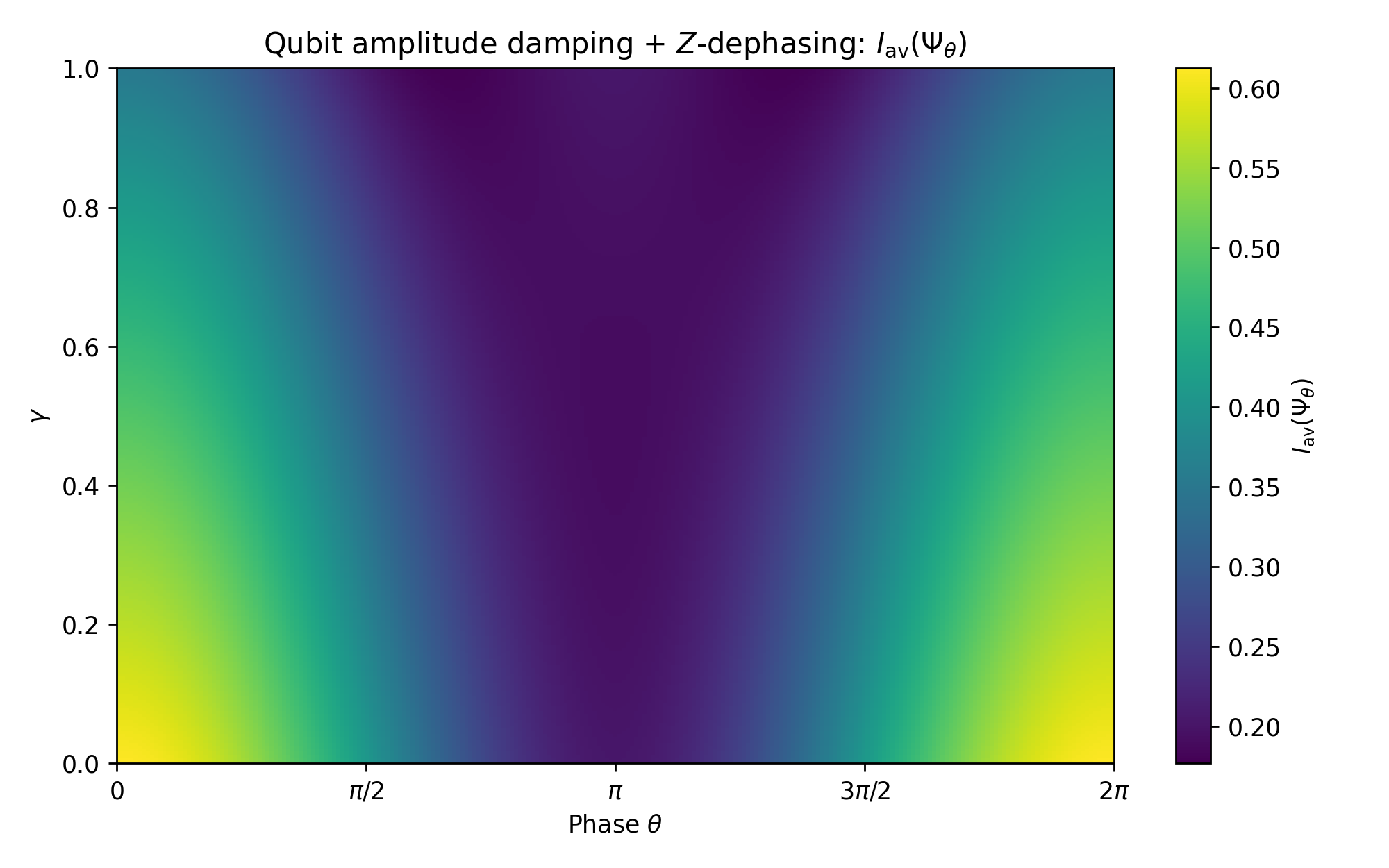}
\caption{Heat maps of $I(\Psi_\theta)$ (left) and $I_{\mathrm{av}}(\Psi_\theta)$ (right) for the interferometric coupling of qubit amplitude damping and $Z$-dephasing.}
\label{fig:ex-ad-zdephase}
\end{figure}
\FloatBarrier

\noindent For the channels considered in the next two examples, each arm is individually not phase retrievable. We show that, after interferometric coupling, the resulting port map becomes phase retrievable for a large range of parameter values, while also exhibiting relatively low compression.\\

\begin{example}[Reset channel and rotated complete dephasing]
We next consider the following two families of arm channels, given by their Kraus operators:
\begin{equation}
A_0=|0\rangle\langle 0|,
\qquad
A_1=|0\rangle\langle 1|,
\end{equation}
and
\begin{equation}
|b_0(\alpha)\rangle
=
\cos\frac{\alpha}{2}|0\rangle+\sin\frac{\alpha}{2}|1\rangle,
\qquad
|b_1(\alpha)\rangle
=
-\sin\frac{\alpha}{2}|0\rangle+\cos\frac{\alpha}{2}|1\rangle.
\end{equation}
The second arm is the complete dephasing channel in the rotated basis
\(\{|b_0(\alpha)\rangle,|b_1(\alpha)\rangle\}\), with Kraus operators
\begin{equation}
B_0(\alpha)=|b_0(\alpha)\rangle\langle b_0(\alpha)|,
\qquad
B_1(\alpha)=|b_1(\alpha)\rangle\langle b_1(\alpha)|.
\end{equation}
Both arms are individually not phase retrievable. Indeed, the first arm is a
reset channel sending every input state to \(|0\rangle\langle 0|\), while the
second arm is a complete dephasing channel and therefore loses the relative
phase in the rotated basis.\\
\\
Figure~\ref{fig:ex-zx-dephase} shows that, although both arms are individually
not phase retrievable, the interferometric port map has a large region where
the injectivity index is positive.
\end{example}

\begin{figure}[H]
\centering
\includegraphics[width=0.495\textwidth]{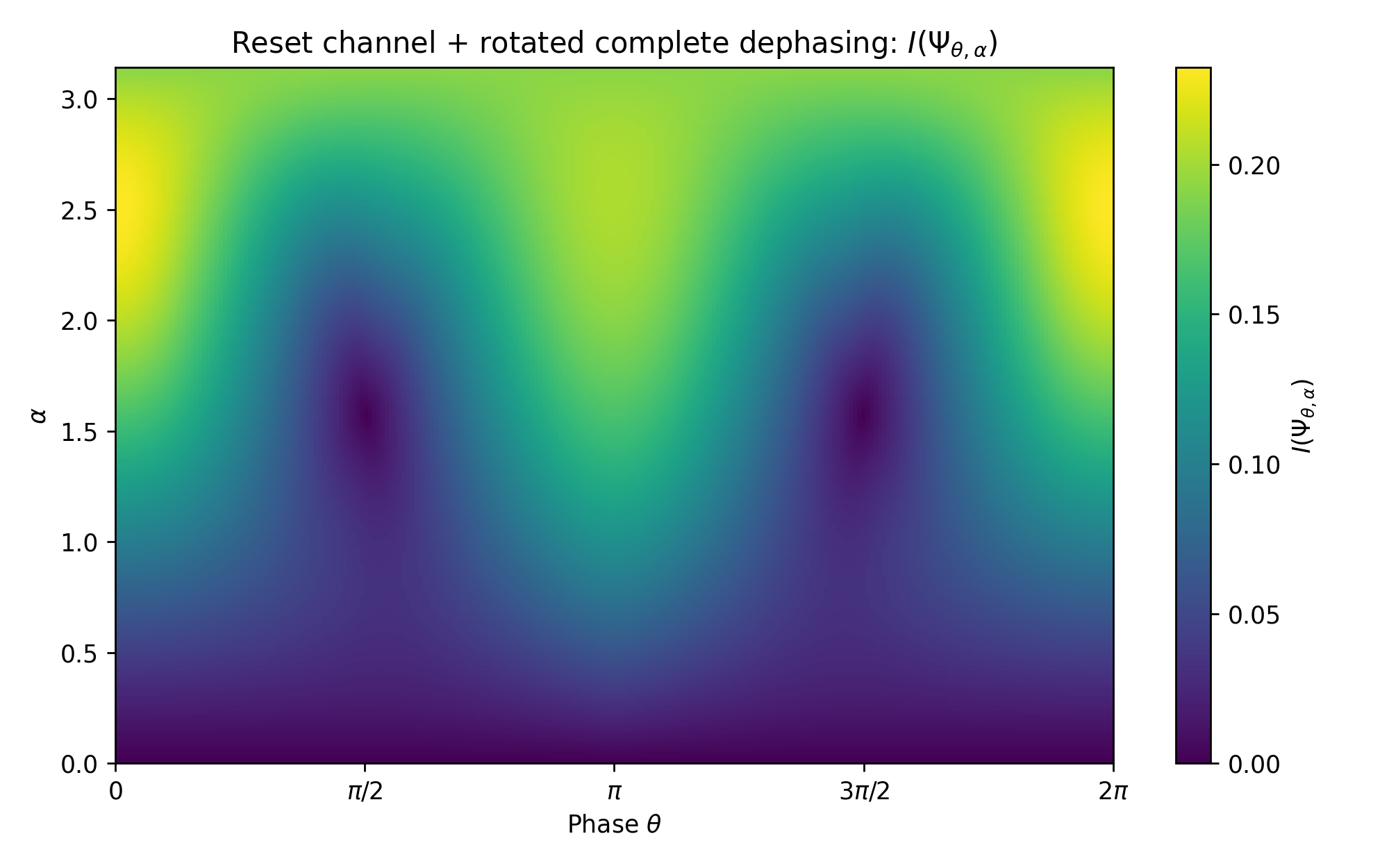}
\hfill
\includegraphics[width=0.495\textwidth]{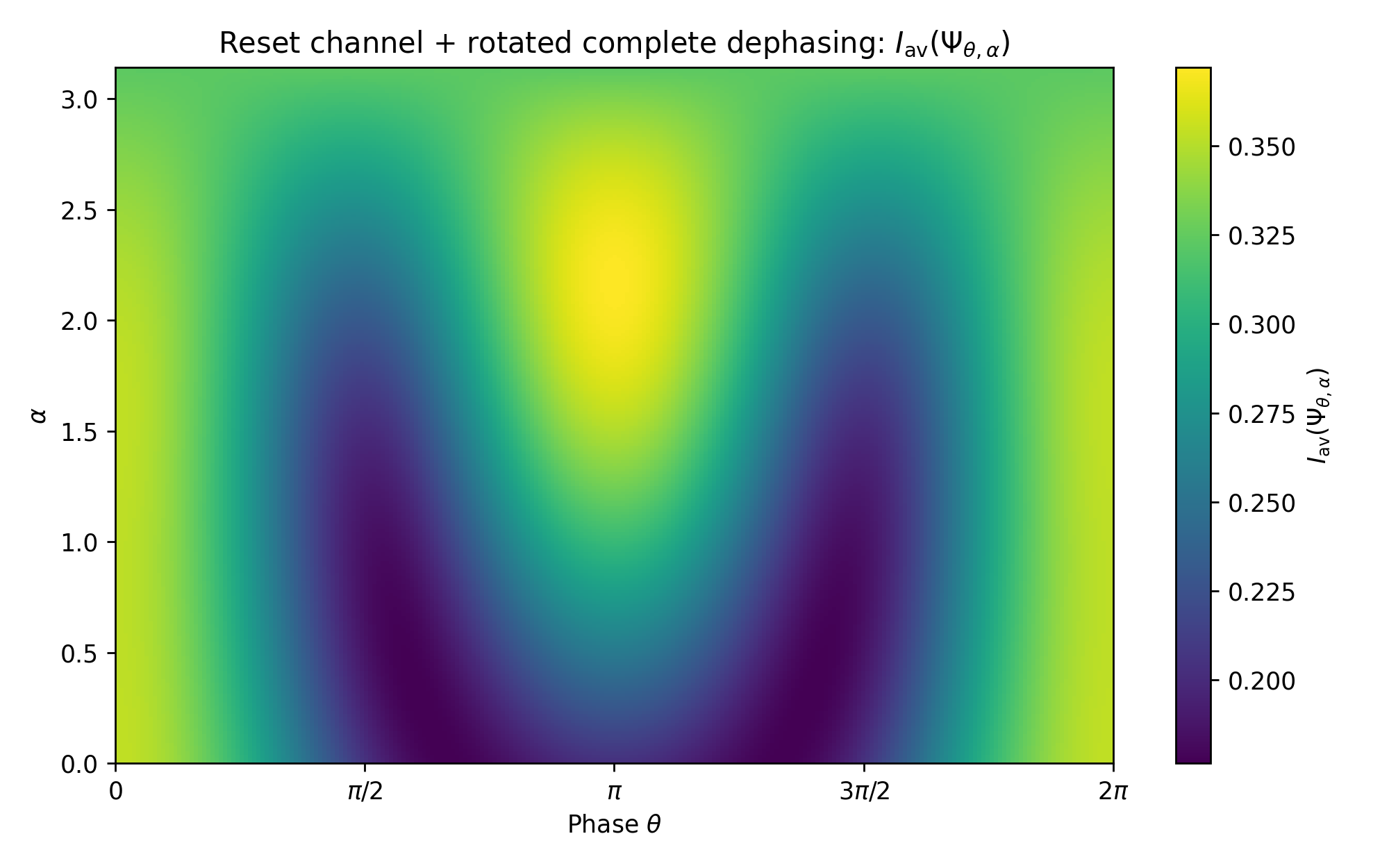}
\caption{Heat maps of $I(\Psi_{\theta,\alpha})$ (left) and $I_{\mathrm{av}}(\Psi_{\theta,\alpha})$ (right) for the interferometric coupling of the reset channel and the rotated complete dephasing channel.}
\label{fig:ex-zx-dephase}
\end{figure}
\FloatBarrier

\begin{example}[Rotated trine families]
We now consider a pair of qubit channels generated by trine families. Let
\begin{equation}
|v_k\rangle=\frac{1}{\sqrt{2}}\bigl(|0\rangle+\omega^k|1\rangle\bigr),
\qquad
\omega=e^{2\pi i/3},
\qquad
k=0,1,2,
\end{equation}
and define
\begin{equation}
R_y(\alpha)=e^{-i\alpha \sigma_y/2},
\qquad
|v_k(\alpha)\rangle=R_y(\alpha)|v_k\rangle.
\end{equation}
The two arms are given by the Kraus operators
\begin{equation}
A_k(\alpha)=\sqrt{\frac{2}{3}}\,|v_k(\alpha)\rangle\langle v_k(\alpha)|,
\qquad
B_k=\sqrt{\frac{2}{3}}\,|v_k\rangle\langle v_k|,
\qquad
k=0,1,2.
\end{equation}
Each arm is individually not phase retrievable since each trine family
spans only a two-dimensional affine slice of the qubit Bloch sphere and does
not distinguish all pure states.\\
\\
The heat maps are shown in Figure~\ref{fig:ex-trine-trine}. Compared with the
dephasing examples, the low-index regions for $I(\Psi_{\theta,\alpha})$ have a more
curved structure, while \(I_{\mathrm{av}}(\Psi_{\theta,\alpha})\) gives a smoother
average measure of the same parameter dependence. This example shows that the
effect of interferometric coupling is not restricted to dephasing channels.
\end{example}

\begin{figure}[H]
\centering
\includegraphics[width=0.495\textwidth]{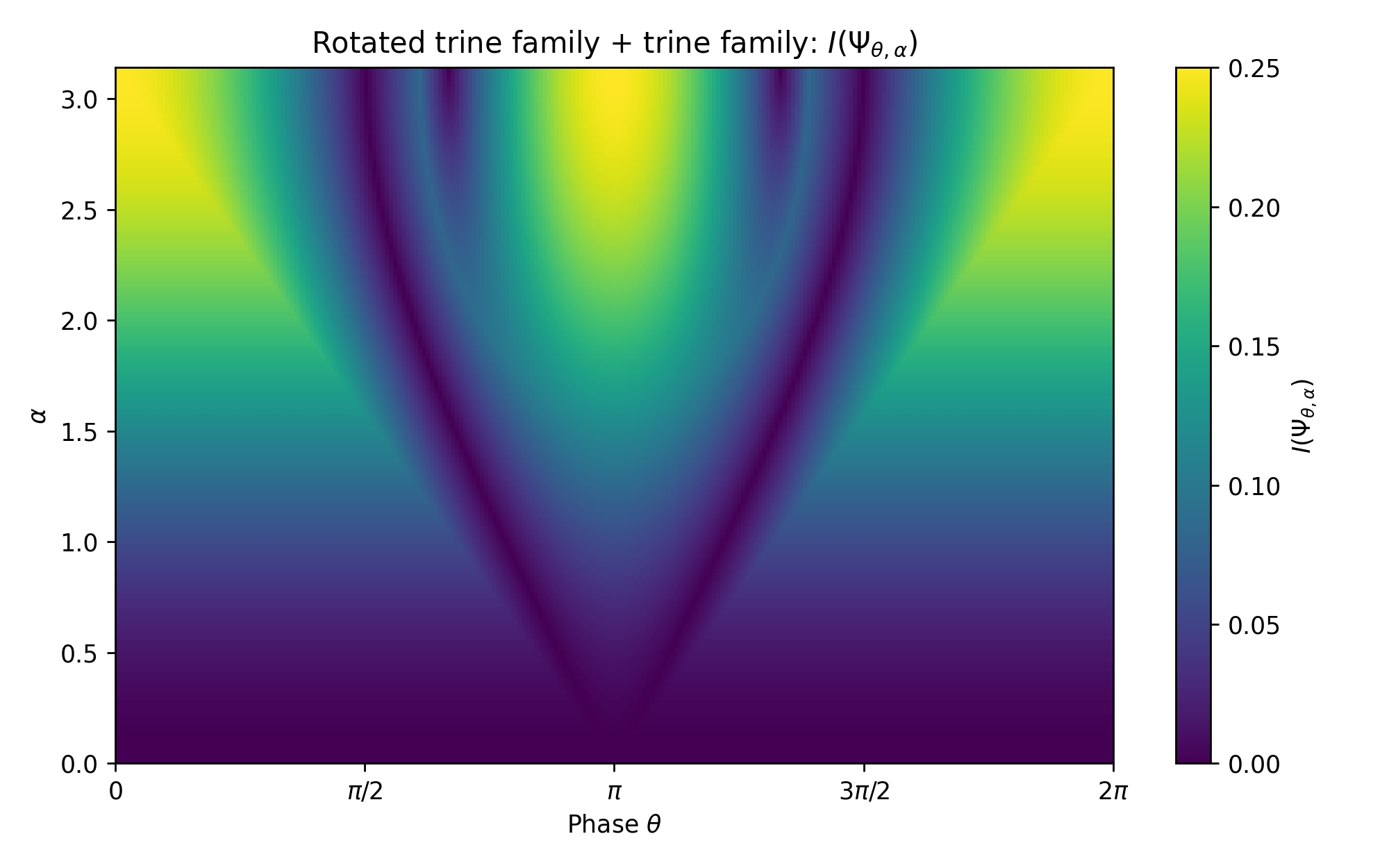}
\hfill
\includegraphics[width=0.495\textwidth]{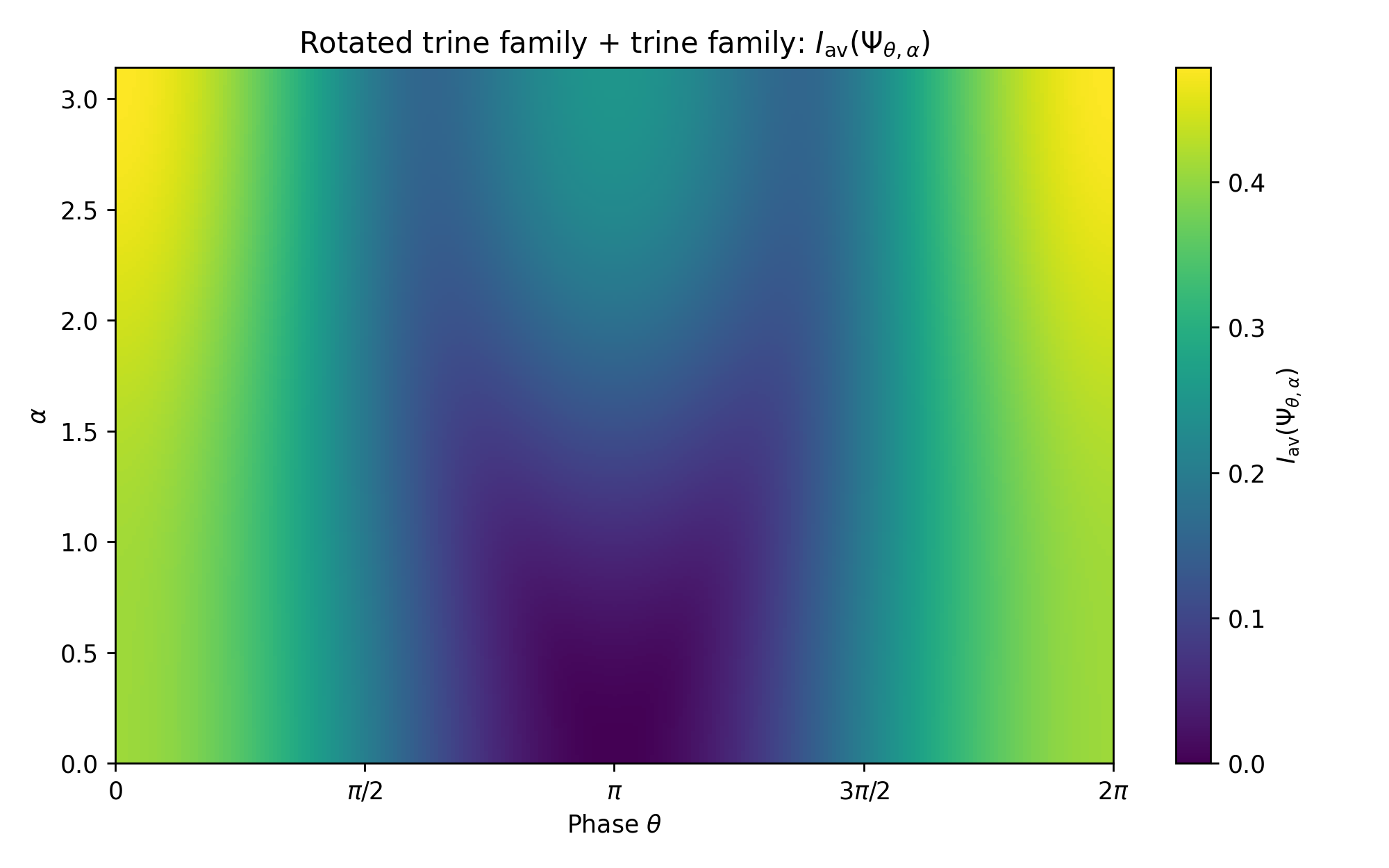}
\caption{Heat maps of $I(\Psi_{\theta,\alpha})$ (left) and $I_{\mathrm{av}}(\Psi_{\theta,\alpha})$ (right) for the interferometric coupling of the rotated trine family and the trine family.}
\label{fig:ex-trine-trine}
\end{figure}
\FloatBarrier

\noindent These examples support a common conclusion. The phase retrieval behavior of the port map depends strongly on the interferometric phase, and coherent coupling can substantially improve the behavior of the resulting completely positive map.

\section{Summary and outlook}

\noindent In this paper we studied phase retrievability of quantum channels through complementary channels, 
operator systems, and interferometric realization. We showed that phase retrievability is equivalent 
to pure-state informational completeness of the complementary channel, which turns the retrieval 
problem into a structural question about a complementary operator system. This viewpoint yields 
obstructions to phase retrievability, including criteria based on the dimension of operator systems and consequences for 
entanglement-breaking and twirling channels.\\
\\
The main novelty of the paper is the use of quantum interferometry as an enhancement mechanism, supported by the mathematical framework of frame theory. By coherently coupling two arm channels at the Kraus level, the interferometer generates cross terms that can enlarge the complementary operator system. In this way, the paper connects complementary channel methods from quantum information, operator-valued phase retrieval from frame theory, and coherent physical realization through quantum interferometry. The injectivity index and the examples provide concrete evidence for this enhancement principle.\\
\\
Several questions remain open. It would be interesting to identify general sufficient conditions under which interferometric coupling guarantees phase retrievability, or more generally enlarges the complementary operator system in a controlled way. It is also natural to study the role of the phase parameter $\theta$ more systematically, especially in higher dimensions, and to revisit special channel classes such as Schur, mixed unitary, and covariant channels from this interferometric perspective.


\section*{Statements and Declarations}

\noindent \textbf{Competing Interests.}
The authors declare no competing interests.\\
\\
\noindent \textbf{Data availability statement:}
No new data were created or analyzed in this study.\\
\\
\noindent \textbf{Funding statement:}
This research received no external funding.\\
\\
\noindent \textbf{Author Contributions.}
All authors contributed equally to the conception of the study, the analysis, and the writing of the manuscript. All authors reviewed and approved the final version of the manuscript.\\




\end{document}